\newcommand{\widesim}[2][3.0]{
  \mathrel{\overset{#2}{\scalebox{#1}[1]{$\sim$}}}
}
\patchcmd\maketitle{\def\@makefnmark{\rlap{\@textsuperscript{\normalfont\@thefnmark}}}}{}{}{}
\def\thanksA#1{
  \footnotemarkA\protected@xdef\@thanks{\@thanks%
        \protect\footnotetextA[\the \c@footnoteA]{#1}}%
}
\def\thanksB#1{%
  \footnotemarkB%
  \protected@xdef\@thanks{\@thanks%
        \protect\footnotetextB[\the \c@footnoteB]{#1}}%
}
\theoremstyle{plain}
\newtheorem{theorem}{Theorem}[section]
\newtheorem{proposition}[theorem]{Proposition}
\newtheorem{corollary}[theorem]{Corollary}
\newtheorem{lemma}[theorem]{Lemma}
\theoremstyle{remark}
\newtheorem{remark}[theorem]{Remark}
\newtheorem{example}[theorem]{Example}
\theoremstyle{definition}
\newtheorem{definition}[theorem]{Definition}
\newtheorem{assumption}{Assumption}
\newcommand{\R}{\mathbb R}
\newcommand{\N}{\mathbb N}
\newcommand{\e}{\epsilon}
\newcommand{\F}{\mathcal F}
\newcommand{\ind}{\mathbf 1}
\newcommand{\red}[1]{\textcolor[rgb]{1.00,0.00,0.00}{{#1}}}
\numberwithin{equation}{section} 
\begin{document}
\title{\bf\Large 
The Most Probable Transition Paths of Stochastic Dynamical
Systems: A Sufficient and Necessary Characterization}
\author{
\bf\normalsize{
Yuanfei Huang\thanksA{School of Mathematics and Statistics \& Center for Mathematical Sciences \& Hubei Key Laboratory of Engineering Modeling and Scientific Computing, Huazhong University of Science and Technology, Wuhan 430074, P.R. China. 
}
$^{,}$\thanksA{Department of Statistics and Data Science, National University of Singapore, 6 Science Drive 2, Singapore 117546. 
}
$^{,}$\thanksA{Present Address: School of Data Science, City University of Hong Kong, Kowloon, Hong Kong SAR. Email: \texttt{yhuan26@cityu.edu.hk}}, 
Qiao Huang\thanksA{Group of Mathematical Physics (GFMUL), Department of Mathematics, Faculty of Sciences, University of Lisbon, Campo Grande, Edif\'{\i}cio C6, PT-1749-016 Lisboa, Portugal. 
}
$^{,}$\thanksA{Present Address: Division of Mathematical Sciences, School of Physical and Mathematical Sciences, Nanyang Technological University, 21 Nanyang Link, Singapore 637371. Email: \texttt{qiao.huang@ntu.edu.sg}}
$^{,}$\thanksB{Corresponding author},
Jinqiao Duan
\thanksA{The Dongguan Key Laboratory for Data Science and Intelligent Medicine, Department of Mathematics and Department of Physics, Great Bay University, Dongguan, Guangdong, 523000, China. Email: \texttt{duan@gbu.edu.cn}}
}
}

\date{}
\maketitle
\vspace{-0.3in}

\begin{abstract}
  The most probable transition paths of a stochastic dynamical system are the global minimizers of the Onsager--Machlup action functional 
  and can be described by a necessary but not sufficient condition, the Euler--Lagrange equation (a second-order differential equation with initial-terminal conditions) from a variational principle. 
  This work is devoted to showing a sufficient and necessary characterization for the most probable transition paths of stochastic dynamical systems with Brownian noise. We prove that, under appropriate conditions, the most probable transition paths are completely determined by a first-order ordinary differential equation. The equivalence is established by 
  showing that the Onsager--Machlup action functional of the original system can be derived from the corresponding  Markovian bridge process. 
  For linear stochastic systems and the nonlinear Hongler's model, the first-order differential equations determining the most probable transition paths are shown analytically to imply the Euler--Lagrange equations of the Onsager--Machlup functional. For general nonlinear systems, the determining first-order differential equations can be approximated, in a short time or for the small noise case. Some numerical experiments are presented to illustrate our results. 
  \bigskip\\
  \textbf{AMS 2020 Mathematics Subject Classification:} 60H10, 60J60, 37H05, 82C31. \\
  \textbf{Keywords and Phrases:} Stochastic dynamical systems, most probable transition paths, Markovian bridges, Onsager--Machlup action functional.
\end{abstract}

\section{Introduction}

Stochastic differential equations (SDEs) have been widely used to describe complex phenomena in physical, biological, and engineering systems. Due to random fluctuations, transition phenomena between dynamically significant states occur in nonlinear systems. Hence, a practical issue is to capture the transition behavior between two metastable states and determine the \emph{most probable transition paths (MPTPs)} 
for the stochastic dynamical systems. This problem can be traced back to the derivation of the \emph{Onsager--Machlup action functional} (OM functional for short), which was firstly introduced by Onsager and Machlup to consider the probabilities of trajectories of thermodynamic states of an irreversible process, as a significant result of the theory of thermodynamic fluctuations \cite{OM53,MO53}. From the modern point of view, their works were devoted to SDEs with linear drifts and constant diffusion coefficients. A key step therein was to express the transition probability by Feynman's path integral. Later on, Tisza and Manning focused on the same issue for SDEs  
with nonlinear coefficients \cite{tisza1957fluctuations}. 

\newpage
Some authors followed the idea of \cite{OM53,MO53,tisza1957fluctuations} to seek the most probable transition paths of a diffusion process as the minimizers of the OM functional by a variational principle. This seems to be meaningless, since a solution curve of the Euler--Lagrange equation from the variational principle is necessarily twice differentiable, while trajectories of a diffusion process are almost surely nowhere differentiable and the probability of a single trajectory is zero. For these reasons, authors of \cite{Durr1978,Ikeda1980} sought the probability that a trajectory lies within a ``neighborhood'' of a differentiable curve. Such a neighborhood is a tube along the curve. They compared the probabilities of different tubes of the same ``thinness''. In a way, they interpreted the OM functional as the action giving the most probable infinitesimal tubes. We shall still use the terminology ``most probable transition paths'' as in those earlier references, to indicate solutions of the variational principle, even though they are not genuine trajectories. This viewpoint is the basis of many topics in physics and biochemistry, such as path integrals \cite{Kath1981path,Schulman1981,Wiegel1986,Wio2013,Khandekar2000}, trajectory sampling \cite{KA20,Lu2017}, transition-path theory \cite{EVE10,GLLL23}, transition rates \cite{Zuckerman2000,Gobbo2012}, transitions for non-Gaussian systems \cite{Chao2019,Huang2019}, biomolecular folding \cite{Faccioli2006,Wang2006}, information projection
\cite{selk2021information,bierkens2014explicit}, and Bayesian inverse problems \cite{dashti2013map,AKLS21}, etc. 

In this paper, we consider the following SDE in the state space $\mathbb{R}^k$:
\begin{equation}\label{firstequation}\left\{
  \begin{aligned}
    dX_t &= -\nabla U(X_t)dt+\sigma dW_t, \quad t>0, \\
    X_0 &=x_0,
  \end{aligned}\right.
\end{equation}
where $U:\mathbb{R}^k\rightarrow\mathbb{R}$ is a given potential function, $W = \{W_t\}_{t\ge0}$ is a standard $k$-dimensional Brownian motion, 
and $\sigma$ is a positive constant. The assumptions on coefficients will be given gradually in the sequel.
We  assume that $x_0\in\mathbb R^k$ is a \emph{metastable state} of (\ref{firstequation}), that is, $x_0$ is a stable state of the corresponding deterministic system $\dot x_t= -\nabla U(x_t)$ \cite{Huang2019}. 
For instance, when $k=1$ and $U = \frac{1}{4}x^4-\frac{1}{2}x^2$ is the Ginzburg--Landau double-well potential, the corresponding metastable states are $\pm 1$ (see Section \ref{examples} for more discussions about this example).

The original question of Onsager and Machlup can be presented in the language of dynamics as follows (see also \cite[Chapter 6, Section 9]{Ikeda1980}): 
Given the transition time $T$, among all differentiable curves connecting two metastable states $x_0$ and $x_T$ from $t=0$ to $t=T$, which one is ``most probable'' for the system \eqref{firstequation}? 
From the probabilistic perspective, \cite{Durr1978,Ikeda1980} translated the problem to studying the transitions of the system \eqref{firstequation} from $x_0$ to a neighborhood of $x_T$ at time $T$. 
See \cite{Zeitouni1989,Zeitouni1987,Zeitouni1988} for more developments. The common approach of them is to evaluate the measure of tubes around a  differentiable curve.
Their key observation is that (under some regularity assumptions) the probability of the solution process of $(\ref{firstequation})$ staying in a $\delta$-neighborhood (or equivalently, a $\delta$-tube) of a curve $\psi$ is given asymptotically by
\begin{equation}\label{OMapproximate}
\begin{split}
\mathbb{P}(\|X -\psi \|_T<\delta) \sim \exp(-S_X^{OM}(\psi))\mathbb{P}(\|W \|_T<\delta), \quad \delta\downarrow0,
\end{split}
\end{equation}
where $S_{X}^{OM}$ is called the \emph{Onsager--Machlup action functional} (OM functional) of \eqref{firstequation} and defined by
\begin{equation}\label{OM}
\begin{split}
S_{X}^{OM}(\psi)=\frac{1}{2}\int_0^T \left[ \frac{|\dot{\psi}(s) + \nabla U(\psi(s))|^2}{\sigma^2} - \triangle U(\psi(s)) \right]ds,
\end{split}
\end{equation}
and $\|\cdot\|_T$ denotes the uniform norm on the space $C([0,T],\mathbb R^k)$ of all continuous functions from $[0,T]$ to $\mathbb R^k$. 

The most probable transition paths of \eqref{firstequation} are determined by the minimizers (not necessarily unique) of $S_{X}^{OM}$ in a suitable space. Such a path can be considered as an infinite-dimensional version of the statistical mode, one of the numerical characteristics of probability laws \cite{AKLS21}. The study of the OM functional yielded many properties of most probable transition paths \cite{DLL21,Li2021}. From the standard calculus of variation, the OM functional can deduce an Euler--Lagrange (EL) equation, based on which a number of numerical methods have been proposed to seek for most probable transition paths \cite{E2002,E2004}. The EL equation has also been used to analyze the geometrical structure of most probable transition paths \cite{Soskin2006}. 

Notice that the Euler--Lagrange equation is a necessary but not sufficient description for the most probable transition paths, and it is a second-order equation with two boundary values which is in general hard to solve neither analytically nor numerically. There is a natural and crucial question: Can we have a sufficient and necessary, and simpler (at least analytically) description for the most probable transition paths? This is what motivates us in the present paper.

A similar notion for transition paths is the Markovian bridges. A Markovian bridge is obtained by conditioning a Markov process (in the sequel we always refer to the solution process $X$ of $(\ref{firstequation})$) to start from an initial state $x_0$ at time 0 and arrive at another state $x_T$ at time $T$. We call the resulting process the \emph{$(x_0,T,x_T)$-bridge derived from} $X$. It follows from the definition that the $(x_0,T,x_T)$-bridge has trajectories almost surely in the space $C_{x_0,x_T}[0,T]$ of all continuous $\mathbb R^k$-valued curves on time interval $[0,T]$ connecting $x_0$ and $x_T$.

In Markovian theory, the transition density function is usually assumed to be continuous in all of its variables. This implies that the path space $C_{x_0,x_T}[0,T]$ of the $(x_0,T,x_T)$-bridge is a null subset of the total space $C([0,T],\mathbb R^k)$, under the pushforward measure of $X$. Besides, we know that the measures on $C_{x_0,x_T}[0,T]$ induced by Markovian bridges are no longer quasi-translation invariant. For these reasons, the previous methods of deriving OM functionals in \cite{Durr1978,Ikeda1980,Zeitouni1989,Zeitouni1987,Zeitouni1988} are not applicable to Markovian bridges. Thus, based on these existing results, we do not know whether the most probable transition paths determined by the solution $X$ of \eqref{firstequation} and by its derived Markovian bridge coincide, even though it is intuitively true.

In the present paper, we will discuss the relations between \eqref{firstequation} and its derived $(x_0,T,x_T)$-bridge. These relations will help us to gain more insights in finding the most probable transition paths of system \eqref{firstequation}. We will show that the most probable transition paths of \eqref{firstequation} and the ones of its derived bridge coincide and are uniquely described by a first-order ordinary differential equation (ODE). For linear stochastic systems and nonlinear Hongler's model, the determining equations can be analytically verified to imply their EL equations. For general nonlinear cases, the determining equation can be approximated in a short time or for the small noise case. 

This paper is organized as follows. Section \ref{Mathematical introduction} serves for some preliminaries and definitions. We also obtain a connection between the most probable paths of SDEs and first-order ODEs. Section \ref{Markovianbridge} is devoted to the theory of Markovian bridges: We first study the finite-dimensional distributions of Markovian bridges in Subsection \ref{finited}; then we use SDE representations with only initial values to model Markovian bridges in Subsection \ref{SDEmarkov}; in Subsection \ref{OMbridge}, we show that OM functionals can be derived from bridge measures by a method different from \cite{Durr1978,Ikeda1980}; based on these, we obtain our main result, Theorem \ref{maintheory} in Subsection \ref{MPTPequivalent}, that the most probable transition paths of a stochastic dynamical system coincide with those of its derived Markovian bridge, and are determined by a first-order ordinary differential equation with an initial condition. The discussions of linear systems, nonlinear Hongler's model and more general nonlinear systems are presented in Section \ref{application}. In Section \ref{examples}, we present some examples to illustrate our results. We also discuss in Section \ref{sampling&control} the usage of SDE representations in trajectory sampling, and the optimal control formulation of most probable transition paths. Section \ref{conclusions} is reserved for a short summary.

\section{Diffusion processes, most probable transition paths and most probable paths}\label{Mathematical introduction}

In this section, we will make some preliminaries for the system \eqref{firstequation} and introduce the definition of most probable transition paths. We will also introduce and study the most probable paths of SDEs with general drifts.

\subsection{Preliminaries on diffusion measures and most probable transition paths}

We consider the following SDE on $\mathbb R^k$:
\begin{equation}\label{SDE}
dX_t= -\nabla U(X_t)dt+\sigma dW_t,
\end{equation}
where $W=(W^1,\cdots,W^k)$ is a standard $k$-dimensional Brownian motion.

We denote by $C[0,T]$ the space $C([0,T],\mathbb R^k)$ of all continuous functions from interval $[0,T]$ to $\mathbb R^k$, and equip it with the uniform topology induced by the following uniform norm so that it is a Banach space,
\begin{equation*}
\begin{split}
\|\phi\|_T=\sup_{t\in[0,T]}|\phi(t)|, \quad\psi\in C[0,T].
\end{split}
\end{equation*}
We also endow $C[0,T]$ with corresponding Borel $\sigma$-field $\mathcal B$ and canonical filtration $\{\mathcal B_t\}_{t\in[0,T]}$ given by
\begin{equation*}
  \mathcal B_t = \sigma\{ \omega(s) \mid \omega\in C[0,T], 0\le s\le t \}.
\end{equation*}

Let $\mathcal{A}$ be the infinitesimal generator of $X$, i.e.,
\begin{equation*}
\begin{split}
\mathcal{A}= -\nabla U\cdot\nabla+\frac{1}{2}\sigma^2\triangle,
\end{split}
\end{equation*}
We suppose the following:
\begin{assumption}\label{wellposedfirst}
  (1) The potential satisfies $U\in C^3(\mathbb{R}^k,\mathbb{R})$.\\
  (2) The local martingale problem associated with $\mathcal{A}$ is well-posed in $C[0,T]$ (e.g., \cite[Chapter 5, Definition 4.5]{Karatzas1991}). That is, for every $(s,x)\in[0,T)\times\mathbb{R}^k$, there exists a probability measure $P^{s,x}$ on $(C[0,T],\mathcal B)$, such that $P^{s,x}(\omega(r)=x,r\leq s)=1$, and for every $f\in C^2(\mathbb{R}^k)$, $M^f = \{M_t^f\}_{t\in[s,T]}$ is a local martingale with respect to the right continuous augmentation of $\{\mathcal B_t\}_{t\in[s,T]}$ under $P^{s,x}$, where
  $$M_t^f(\omega):=f(\omega(t))-f(\omega(s))-\int_s^t\mathcal{A} f(\omega(r))dr,\quad \omega\in C[0,T].$$
\end{assumption}

The regularity assumption for potential $U$ in \ref{wellposedfirst}-(1) can guarantee the pathwise uniqueness of \eqref{SDE} (e.g., \cite[Chapter IV, Theorem 3.1]{Ikeda1980}). On the other hand, the existence of solutions for the local martingale problem associated with $\mathcal{A}$ is equivalent to the existence of weak solutions for its corresponding SDE \eqref{SDE} \cite[Chapter 5, Corollary 4.8, Corollary 4.9]{Karatzas1991}, so that there exist a filtered probability space $(\Omega,\mathcal{F}, \{\mathcal{F}_t\}_{t\ge0}, \mathbb{P})$ satisfying usual conditions (e.g., \cite[Chapter 1, Definition 2.25]{Karatzas1991}), a $k$-dimensional Brownian motion $W$ on it and a continuous, adapted $\mathbb R^k$-valued process $X$ such that the equation \eqref{SDE} holds (in the sense of stochastic integral). Applying the result of Yamada--Watanabe \cite[Chapter 5, Corollary 3.23]{Karatzas1991}, we obtain the strong well-posedness of \eqref{SDE}. Moreover, Assumption \ref{wellposedfirst}-(2) implies that $X$ is strong Markov under $\mathbb{P}$ \cite[Theorem 4.4.2]{Ethier1986}. We denote the conditional probability measure $\mathbb{P}(\cdot\mid X_0=x_0)$ shortly by $\mathbb{P}^{x_0}(\cdot)$. Therefore, we have

\begin{lemma}[Well-posedness of SDEs]\label{strong-X}
 Under Assumption \ref{wellposedfirst}, the SDE \eqref{SDE} has a pathwisely unique strong solution $X=\{X_t\}_{t\in[0,T]}$ on the probability space $(\Omega,\mathcal{F}, \{\mathcal{F}_t\}_{t\ge0}, \mathbb{P}^{x_0})$, which is strong Markov and satisfies $\mathbb{P}^{x_0}(X_0 = x_0) = 1$.
\end{lemma}

The transition semigroup $(T_{s,t})_{0\leq s<t}$ of the system \eqref{SDE} is defined as
\begin{equation*}
\begin{split}
(T_{s,t}f)(x)=\mathbb{E}(f(X_t)|~X_s=x),
\end{split}
\end{equation*}
for each $f\in \mathfrak{B}_b(\mathbb{R}^k)$ and $x\in\mathbb{R}^k$, here $\mathfrak{B}_b(\mathbb{R}^k)$ denotes the space of all measurable and bounded functions $f:\mathbb{R}^k\rightarrow \mathbb{R}^k$. The notation $\mathbb{E}(\cdot \mid X_s=x)$ denotes the expectation with respect to the regular conditional probability measure $\mathbb{P}(\cdot \mid X_s=x)$. We suppose that $T_{s,t}$ admits a transition density $p(\cdot,t|x,s)$ with respect to a $\sigma$-finite measure $\nu$ on $\mathbb{R}^k$, in the sense that
\begin{equation*}
T_{s,t}f(x)=\int_{\mathbb{R}^k} f(y)p(y,t|x,s)\nu(dy).
\end{equation*}
For simplicity, we assume that $\nu$ is the Lebesgue measure. Since the drift term $-\nabla U$ and diffusion coefficient $\sigma$ do not depend on time, we know that this transition density is time homogenous \cite{Ikeda1980}, i.e.,
\begin{equation*}
p(y,t+s|z,t)=p(y,s|z,0),
\end{equation*}
for every $t,s\in(0,\infty)$ and $y,z\in \mathbb{R}^k$.

Under Assumption \ref{wellposedfirst}, the transition densities satisfy the following properties (see Chapter 6 of \cite{Bogachev2015}):\\
(\lowercase \expandafter {\romannumeral 1}) $(s,y,x)\mapsto p(y,s|x,0)$ is joint continuous,\\
(\lowercase \expandafter {\romannumeral 2}) The transition density function satisfies the Kolmogorov forward equation (or Fokker-Planck equation)
\begin{equation}\label{forward}
\begin{split}
\frac{\partial p(x,t|x_0,0)}{\partial t}= \nabla(\nabla U(x)p(x,t|x_0,0))+\frac{1}{2}\sigma^2\triangle p(x,t|x_0,0),
\end{split}
\end{equation}
and the Kolmogorov backward equation
\begin{equation}\label{backward}
\begin{split}
\frac{\partial p(x_T,T|x,t)}{\partial t}= \nabla U(x)\cdot\nabla p(x_T,T|x,t)-\frac{1}{2}\sigma^2\triangle p(x_T,T|x,t),
\end{split}
\end{equation}
both in the sense of generalized functions.

Due to the strong Markov property of the process $X$, the Chapman--Kolmogorov equations
\begin{equation*}
\begin{split}
p(y,T|x_0,0)=\int_{\mathbb{R}^k}p(z,T-s|x_0,0)p(y,s|z,0)dz,
\end{split}
\end{equation*}
hold for all $y\in \{x\mid p(x,T|~x_0,0)>0\}$ and all $0<s<T$.

We suppose that $x_0$ is a fixed state of system \eqref{SDE}. And let $x_T$ denote another given state of system \eqref{SDE}. The space of trajectories of $X$ is the space $C_{x_0}[0,T]$ of continuous functions
\begin{equation*}
\begin{split}
C_{x_0}[0,T]=\{\phi \mid \phi:[0,T]\rightarrow\mathbb{R}^k~\mbox{is continuous},~\phi(0)=x_0\}.
\end{split}
\end{equation*}
Note that $C_{x_0}[0,T]$ is not a linear space unless $x_0=0$. We equip $C_{x_0}[0,T]$ with the subspace topology of $C[0,T]$,
and denote the corresponding Borel $\sigma$-field by $\mathcal{B}^{x_0}_{[0,T]}$. There is another way to realize elements in $\mathcal{B}^{x_0}_{[0,T]}$, in terms of cylinder sets, instead of open sets. A cylinder set of $C_{x_0}[0,T]$ is of the form
\begin{equation*}
\begin{split}
I=\{\phi\in C_{x_0}[0,T]\mid \phi(t_1)\in E_1,\cdots,\phi(t_n)\in E_n\},
\end{split}
\end{equation*}
where $0\le t_1<\cdots<t_n\leq l$ and $E_i$'s are Borel sets of $\mathbb{R}^k$. It is well known \cite{Karatzas1991} that $\mathcal{B}^{x_0}_{[0,T]}$ is the $\sigma$-field generated by all cylinder sets, that is, the smallest $\sigma$-field containing all cylinder sets. An open tube set $K_T(\psi,\delta)$ is defined as
\begin{equation*}
\begin{split}
K_T(\psi,\delta)=\{\phi\in C_{x_0}[0,T] \mid \|\psi-\phi\|_T<\delta\},
\end{split}
\end{equation*}
where $\delta>0$ is called the tube size. The corresponding closed tube set is
\begin{equation*}
\begin{split}
\bar{K}_T(\psi,\delta)=\{\phi\in C_{x_0}[0,T] \mid \|\psi-\phi\|_T\leq\delta\},
\end{split}
\end{equation*}
which is the closure of $K_T(\psi,\delta)$ under the uniform topology. Let $B_\rho(x)$ denote the open ball centered at $x\in \mathbb{R}^k$ with radius $\rho>0$, we denote by $\bar{B}_\rho(x)$ the corresponding closed ball.

The measure $\mu_X^{x_0}$ induced by $X$ on the space $(C_{x_0}[0,T], \mathcal{B}^{x_0}_{[0,T]})$ is defined by
\begin{equation*}
\begin{split}
\mu_X^{x_0}(B)&=\mathbb{P}^{x_0}(\{\omega\in\Omega \mid X(\omega)\in B\}),
\end{split}
\end{equation*}
for all $B\in\mathcal{B}^{x_0}_{[0,T]}$. Recall that the measure $\mu_{\sigma W}^{x_0}$ induced by the Brownian motion $\sigma W$ is called the \emph{Wiener measure}.  Once a positive $\delta$ is given, we can compare the probabilities of closed tubes for all $\psi\in C_{x_0}[0,T]$ using $\mu_X^{x_0}(\bar{K}_T(\psi,\delta))$. This enables us to discuss the problem of finding the  \emph{most probable transition path(s)} of $X$.

In general, we have the following definition for most probable transition paths (MPTPs).

\begin{definition}[MPTPs]
The most probable transition path of the system \eqref{SDE} connecting two given states $x_0$ and $x_T$, is a path $\psi^*$ that makes the OM functional \eqref{OM} achieve its minimum value in the following path space,
\begin{equation*}
\begin{split}
C^2_{x_0,x_T}[0,T] := \{ \psi:[0,T]\rightarrow\mathbb{R}^k \mid \dot{\psi},~\ddot{\psi}~\mbox{exist and are continuous},~\psi(0)=x_0,~\psi(T)=x_T\},
\end{split}
\end{equation*}
that is, $\psi^* \in  C^2_{x_0,x_T}[0,T]$ is such that
\begin{equation*}
\begin{split}
S_{X}^{OM}(\psi^*)=\inf_{\psi\in C^2_{x_0,x_T}[0,T]}S_{X}^{OM}(\psi).
\end{split}
\end{equation*}
\end{definition}

By \eqref{OMapproximate}, this is equivalent to say that for all $\psi\in C^2_{x_0,x_T}[0,T]$,
\begin{equation}\label{MPTP-2}
\begin{split}
\lim_{\delta\downarrow0}\frac{\mu^{x_0}_X(K_T(\psi^*,\delta))}{\mu_X^{x_0}(K_T(\psi,\delta))}\geq1.
\end{split}
\end{equation}
One can replace the open tubes in \eqref{MPTP-2} by closed tube sets, adopting a slight modification of the proof of \cite[Theorem 9.1]{Ikeda1980} (or \cite[Section 4]{Durr1978}, \cite[Theorem 1]{Zeitouni1988}).

\subsection{Most probable paths of SDEs}

In the following lemma we show that under a given probability measure on path space, the probability of a closed tube set can be approximated by the probabilities of a family of cylinder sets. This property helps us to study the tube probability easily.

\begin{lemma}[Approximation for probabilities of closed tubes]\label{cylinderset}
  Let $\mu$ be a probability measure defined on $(C_{x_0}[0,T],\mathcal{B}^{x_0}_{[0,T]})$. For each closed tube set $\bar{K}_T(\psi,\delta)$ with $\psi\in C_{x_0}[0,T]$ and $\delta>0$, there exists a family of closed cylinder sets $\{\bar{I}_n\}_{n=1}^\infty$ such that
\begin{equation*}
\begin{split}
\mu (\bar{K}_T(\psi,\delta))= \lim_{n\rightarrow\infty}\mu (\bar{I}_n).
\end{split}
\end{equation*}
\end{lemma}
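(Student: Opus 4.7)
The plan is to approximate the closed tube $\bar K_l(\psi,\delta)$ from outside by a decreasing sequence of closed cylinder sets obtained by restricting paths at finer and finer partition nodes of $[0,l]$, and then to invoke continuity from above of the probability measure $\mu$.

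First I would fix a nested family of partitions $\pi_n = \{0 = t_0^n < t_1^n < \cdots < t_{m_n}^n = l\}$ with $\pi_n \subset \pi_{n+1}$ and $\bigcup_n \pi_n$ dense in $[0,l]$; the dyadic choice $t_i^n = il/2^n$, $i = 0,1,\ldots,2^n$, is convenient. For each $n$ I would define
\[
\bar I_n(\psi,\delta) := \{\omega \in C_{x_0}[0,l] : \omega(t_i^n) \in \bar B_\delta(\psi(t_i^n)),\ i = 0,1,\ldots,m_n\},
\]
which has the form prescribed in the preliminaries with closed Borel sets $E_i = \bar B_\delta(\psi(t_i^n))$. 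The nesting of the partitions gives $\bar I_{n+1}(\psi,\delta) \subset \bar I_n(\psi,\delta)$, and the inclusion $\bar K_l(\psi,\delta) \subset \bar I_n(\psi,\delta)$ is immediate from the definition of the uniform norm.

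The key step is the reverse inclusion $\bigcap_{n=1}^\infty \bar I_n(\psi,\delta) \subset \bar K_l(\psi,\delta)$. For $\omega$ in the intersection, the estimate $|\omega(t) - \psi(t)| \leq \delta$ holds for every $t$ in the dense set $\bigcup_n \pi_n$, and continuity of $\omega$ and $\psi$ propagates the inequality to all $t \in [0,l]$. Combining both inclusions yields $\bigcap_n \bar I_n(\psi,\delta) = \bar K_l(\psi,\delta)$. Since $\mu$ is a probability measure and $\{\bar I_n(\psi,\delta)\}$ is decreasing, continuity from above gives
\[
\mu(\bar K_l(\psi,\delta)) \;=\; \mu\Bigl(\bigcap_{n=1}^\infty \bar I_n(\psi,\delta)\Bigr) \;=\; \lim_{n\to\infty} \mu(\bar I_n(\psi,\delta)),
\]
which is exactly the claim. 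I do not anticipate any substantial obstacle: the argument rests only on the interplay between density of partition points and continuity of sample paths, together with the elementary continuity of probability from above. The mildest subtlety worth checking is that each $\bar I_n$ is indeed measurable in $\mathcal B_{[0,l]}^{x_0}$, which follows because cylinder sets generate this $\sigma$-field.
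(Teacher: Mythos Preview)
Your proof is correct and follows essentially the same approach as the paper: both choose an increasing sequence of finite subsets with dense union in $[0,l]$ (the paper enumerates the rationals in $(0,l)$, you use dyadic partition nodes), form the corresponding decreasing closed cylinder sets, identify their intersection with $\bar K_l(\psi,\delta)$ via continuity of sample paths, and conclude by continuity from above of $\mu$. The only cosmetic difference is the particular dense set chosen.
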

\begin{proof}
The proof is separated into two steps.


\emph{Step 1.} Let $\mathbb{Q}$ denote the countable set of rational numbers in $\mathbb{R}$. Since $(0,T)\cap \mathbb{Q}$ is a countable set, we denote it as a sequence $\{q_1,q_2,\cdots,q_n,\cdots\}$. 
Define a family of incremental sequences $\{Q_n\}_{n=1}^\infty$ by
\begin{equation*}
Q_n:=\{q_1,\cdots,q_n\}.
\end{equation*}
Then we have $(0,T)\cap \mathbb{Q} = \cup_{n=1}^\infty Q_n$. By the continuity, we can derive the following equalities:
\begin{equation}\label{est-3}
\begin{split}
&\left\{ w\in C_{x_0}[0,T] \Bigg| \sup_{t\in[0,T]} |w(t)-\psi(t)|\leq\delta \right\}\\
=& \left\{ w\in C_{x_0}[0,T] \Bigg| \sup_{t\in(0,T)\cap \mathbb{Q}}|w(t)-\psi(t)|\leq\delta \right\} \\
=& \bigcap_{t\in(0,T)\cap \mathbb{Q}} \left\{ w\in C_{x_0}[0,T] \mid |w(t)-\psi(t)|\leq\delta \right\} \\
=& \bigcap_{n=1}^\infty \bigcap_{t\in Q_n} \left\{ w\in C_{x_0}[0,T] \mid |w(t)-\psi(t)|\leq\delta \right\} \\
=& \bigcap_{n=1}^\infty \left\{ w\in C_{x_0}[0,T] \Bigg| |w(t)-\psi(t)|\leq\delta, \forall t\in Q_n \right\}.
\end{split}
\end{equation}

\emph{Step 2.} Noting that the family $\{ \{w\in C_{x_0}[0,T] \mid |w(t)-\psi(t)|\leq\delta, \forall t\in Q_n\}: n=1,\cdots,\infty\}$ is decreasing since $\{Q_n\}_{n=1}^\infty$ is increasing, we have
\begin{equation*}
\begin{split}
\mu(\bar{K}_T(\psi,\delta))
=&\ \mu\left(\left\{w\in C_{x_0}[0,T] \Bigg| \sup_{t\in[0,T]} |w(t)-\psi(t)|\leq\delta\right\}\right)\\
=&\ \mu\left(\bigcap_{n=1}^\infty \{w\in C_{x_0}[0,T] \mid |w(t)-\psi(t)|\leq\delta, \forall t\in Q_n\}\right)\\
=&\ \lim_{n\rightarrow\infty}\mu(\{w\in C_{x_0}[0,T] \mid |w(t)-\psi(t)|\leq\delta, \forall t\in Q_n\})\\
=&\ \lim_{n\rightarrow\infty}\mu(\bar{I}_n(\psi,\delta)),
\end{split}
\end{equation*}
where $\bar{I}_n(\psi,\delta):=\{\phi\in C_{x_0}[0,T]\mid \phi(t)\in \bar{B}_\delta(\psi(t)), \forall t\in Q_n\}$
is a closed cylinder set. The proof is complete.
\end{proof}

\begin{remark}
  In general, this lemma does not work for open tubes. Indeed, if we replace the closed tubes $\bar K_T$ and closed cylinder sets $\bar I_n$ by their open versions, then the first two equalities of \eqref{est-3} should read
\begin{equation*}
\begin{split}
\left\{ w\in C_{x_0}[0,T] \Bigg| \sup_{t\in[0,T]} |w(t)-\psi(t)| < \delta \right\} &= \left\{ w\in C_{x_0}[0,T] \Bigg| \sup_{t\in(0,T)\cap \mathbb{Q}}|w(t)-\psi(t)| < \delta \right\} \\
&\subset \bigcap_{t\in(0,T)\cap \mathbb{Q}} \left\{ w\in C_{x_0}[0,T] \mid |w(t)-\psi(t)| < \delta \right\}.
\end{split}
\end{equation*}
\end{remark}

Now we consider the following SDE on $\R^k$ with general drift:
    \begin{equation}\label{SDEt}
dZ_t= f(t,Z_t)dt+\sigma dW_t,\quad Z_0=x_0;
\end{equation}
and its deterministic counterpart, i.e., the following (first-order) ODE getting rid of the noise term:
 \begin{equation}\label{MPPequ}
d\psi(t)= f(t,\psi(t))dt,\quad \psi(0)=x_0,
\end{equation}
where $f$ is an $\R^k$-valued function defined on $[0,T]\times \R^k$.

Inspired by \eqref{MPTP-2}, we have the following definition of most probable paths (MPPs). The difference between most probable transition paths and most probable paths is that, MPTPs are determined by conditioning on starting points and ending points while MPPs only specify starting points.
\begin{definition}[MPPs]
The most probable path (MPP) of the system  \eqref{SDEt} is a path $\psi^*\in C^2_{x_0}[0,T]$ such that
\begin{equation}\label{eqn-1}
\begin{split}
\lim_{\delta\downarrow0}\frac{\mu_Z^{x_0}(\bar{K}_T(\psi^*,\delta))}{\mu_Z^{x_0}(\bar{K}_T(\psi,\delta))}\geq1, \quad \text{for all } \psi \in C^2_{x_0}[0,T].
\end{split}
\end{equation}
\end{definition}

The following elementary lemma gives a characterization of MPPs. This will play a crucial role when studying MPTPs of Markovian bridges derived from \eqref{SDE} in the next section.

\begin{lemma}[MPPs \& 1st order ODEs]\label{MPP}
Suppose that the well-posedness of weak solutions of \eqref{SDEt} on $[0,T]$ is promised. Then a path $\psi^*\in C^2_{x_0}[0,T]$ is a most probable path of \eqref{SDEt} if and only if it solves the ODE \eqref{MPPequ}.
\end{lemma}
\begin{proof}
Without loss of generality, we assume that $T$ is rational. 
Due to Lemma \ref{cylinderset}, we know that there exist cylinder sets $\bar I_n(\psi^*,\delta)$ and $\bar I_n(\psi,\delta)$ whose probabilities under $\mu_Z^{x_0}$ approach to $\mu_Z^{x_0}(\bar{K}_T(\psi^*,\delta))$ and $\mu_Z^{x_0}(\bar{K}_T(\psi,\delta))$ respectively. Then the definition \eqref{eqn-1} reduce to
\begin{equation*}
\begin{split}
\lim_{\delta\downarrow0} \lim_{n\rightarrow\infty}\frac{\mu_Z^{x_0}(\bar{I}_n(\psi^*,\delta))}{\mu_Z^{x_0}(\bar{I}_n(\psi,\delta))}.
\end{split}
\end{equation*}

For SDE \eqref{SDEt}, one typically applies the Euler--Maruyama algorithm as an
approximate method for propagating the position as a function
of time. That is,
\begin{equation*}
    z_{i+1}=z_i+f(t_i,z_i)\Delta t+\sigma \sqrt{\Delta t} \xi_i,\quad 0=t_0<t_1<\cdots<t_m = T,\quad i=1,\cdots,m, 
\end{equation*}
where $\{\xi_i\}_{i=0}^{m-1}$ is a family of $k$-dimensional independent standard Gaussian random variables with zero mean and unit variance, $\Delta t = T/m = t_{i+1} - t_i$ is the time step. We denote the change of variables from $\{\xi_i\}_{i=0}^{m-1}$ to $\{z_i\}_{i=1}^m$ as a map $\mathfrak{T}_m$: $\mathbb{R}^{mk}\rightarrow\mathbb{R}^{mk}$, $\{\xi_i\}\mapsto\mathfrak{T}(\{\xi_i\}):=\{z_{i+1}\}$. It is easy to check that the Jacobian determinant of this change of variables is
\begin{equation*}
    \frac{\partial (\xi_0,\cdots,  \xi_{m-1})}{\partial (z_1,\cdots,  z_m)}=\left(\frac{1}{\sigma\sqrt{\Delta t}}\right)^m,
\end{equation*}
and thus the map $\mathfrak{T}_m$ is one-to-one.
And thus for a subset $Q_n$ (with $n\ge m$) of rational numbers containing $\{t_i\}_{i=1}^m$, and for any $\psi$ in $C^2_{x_0}[0,T]$, we have
\begin{align}
  &\ \mu_Z^{x_0}(\bar{I}_n(\psi,\delta)) \widesim{m,n\to\infty}
  \mathbb P^{x_0} \left( z_i\in \bar B_\delta(\psi(t_i)), \forall i=1,\cdots, m \right) \notag \\
    =&\ \mathbb P^{x_0} \left( (\xi_0,\cdots, \xi_{m-1}) \in \mathfrak{T}_m^{-1} \left(\prod_{i=1}^m \bar B_\delta(\psi(t_i)\right) \right) \notag \\ 
    =&\ \int_{\mathbb{R}^k}\cdots\int_{\mathbb{R}^k}\ind_{\mathfrak{T}_m^{-1}\left(\prod_{i=1}^m \bar B_\delta(\psi(t_i) \right)} \left(\frac{1}{\sqrt{(2\pi)^k}}\right)^m\exp \left\{ - \sum_{i=0}^{m-1}\frac{|\xi_i|^2}{2} \right\}d\xi_0\cdots d\xi_{m-1} \label{eqn-a} \\
    =&\ \int_{\bar B_\delta(\psi(t_m))}\cdots\int_{\bar B_\delta(\psi(t_1))} \left(\frac{1}{\sqrt{(2\pi)^k\Delta t}\sigma}\right)^m\exp \left\{- \frac{\Delta t}{\sigma^2} \underbrace{\sum_{i=0}^{m-1}\frac{1}{2}\left|  \frac{z_{i+1}-z_i}{\Delta t} -f(t_i,z_i)\right|^2}_{=:\Xi} \right\}dz_1\cdots dz_m \label{eqn-b} \\
    =:&\ J(\psi,\delta), \notag
\end{align}
where $\ind_{E}$ denotes the indicator of set $E$, the first row means that the two quantities on both sides of $\sim$ are arbitrarily close when $m,n\to\infty$. 

Now let $\delta$ go to zero. If $\psi^*$ solves \eqref{MPPequ}, then the term $\Xi$ in the exponential of  $J(\psi,\delta)$ is uniformly small for all sufficiently small $\delta$, say $\Xi <\e$, so that $$J(\psi^*,\delta)> \left(\frac{|B_\delta(0)|}{\sqrt{2\Delta t\pi}\sigma}\right)^m e^{-\e}.$$
Whereas if $\psi$ does not satisfy \eqref{MPPequ}, then there exists a $\hat t\in [0,T)$ such that $\dot\psi(\hat t)\ne f(\hat t,\psi(\hat t))$. In this case we can choose $m,n$ sufficiently large and $\delta$ sufficiently small, so that there exists an $0\le i\le m-1$ such that $\hat t\in [t_i,t_{i+1}]$ and
$$\left|  \frac{z_{i+1}-z_i}{\Delta t} -f(t_i,z_i)\right|^2\ge \e, \quad \forall (z_i,z_{i+1}) \in \bar B_\delta(\psi(t_i)) \times \bar B_\delta(\psi(t_i+1)),$$ 
which implies 
$$J(\psi,\delta)\le \left(\frac{|B_\delta(0)|}{\sqrt{2\Delta t\pi}\sigma}\right)^m e^{-\e}.$$
This completes the proof.


\end{proof}

\begin{remark}\label{rmk-a}
(i). As a consequence, under the assumption of the above lemma, the existence and uniqueness of a most probable path of \eqref{SDEt} is equivalent to the existence and uniqueness of a solution of \eqref{MPPequ} in $C^2_{x_0}[0,T]$.


(ii). If the drift function $f$ is only defined on $[0,T)\times\R^k$ which is the case when we study the MPPs of bridge processes, solutions of \eqref{SDEt} and \eqref{MPPequ} on $[0,T]$ need to be understood as (almost surely) continuous extension from $[0,T)$ to $[0,T]$.

(iii). The author of \cite{Dekker78} proved the same result as the above lemma by minimizing the discretized OM functional. 

(iv). The authors of \cite{MP16} wrote the relations \eqref{eqn-b} and \eqref{eqn-a} as the following formal expressions:
\begin{equation}\label{Pp}
  -\ln \mathbb{P}^{x_0}=C + \frac{\Delta t}{\sigma^2}\sum_i \frac{1}{2}\left|  \frac{z_{i+1}-z_i}{\Delta t} + f(t_i, z_i) \right|^2, \quad \text{and} \quad  \mathbb{P}^{x_0}\propto \prod_i \exp\left(-\frac{1}{2}\xi_i^2\right).
\end{equation}
where $C$ is a constant. These expressions are a bit misleading because they do not indicate the sets where the probability measure $\mathbb{P}^{x_0}$ take values or its probability density is integrated.
The authors claimed from the second expression of \eqref{Pp} that the ``trajectory probability'' is completely determined by the noise and thus is independent of the system under study. This is inappropriate, since the set under the integral in the complete expression \eqref{eqn-a}, i.e., $\ind_{\mathfrak{T}_m^{-1}\left(\prod_{i=1}^m \bar B_\delta(\psi(t_i) \right)}$, does depend on the underlying system (recall the definition of $\mathfrak{T}_m$ in the proof).
\end{remark}

\section{Markovian bridges and most probable transition paths}\label{Markovianbridge}

In this section, we will systematically study the relations between Markovian bridges and most probable transition paths. 

\subsection{Finite-dimensional distributions of Markovian bridges}\label{finited}

Lemma \ref{cylinderset} enables us to use cylinder sets to approximate the tube probabilities. So it is essential for us to consider the finite-dimensional distributions of Markovian bridges.

Recall that we have fixed an $x_0\in\mathbb R^k$. Under our setting, we know that the conditional probability distribution $\mathbb{P}^{x_0}(X\in\cdot \mid X_T)$ has a regular version, that is, it determines a regular conditional distribution of $X$ given $x_T$ under $\mathbb{P}^{x_0}$ \cite{Chaumont2011,Fitzsimmons1993}. We denote by $\mu_{X}^{x_0,\cdot}$ the corresponding probability kernel from $\mathbb R^k$ to $C_{x_0}[0,T]$, and call it the \emph{bridge measure}. This means that for $(\mathbb P^{x_0}\circ X_T^{-1})$-a.s. $x_T\in\mathbb R^k$ and all $B\in \mathcal{B}^{x_0}_{[0,T]}$,
\begin{equation*}
  \mu_{X}^{x_0, x_T}(B) = \mathbb{P}^{x_0}(X\in B \mid X_T = x_T). 
\end{equation*}

Under $\mathbb{P}^{x_0}(\cdot\mid X_T=x_T)$, the process $\{X_t\}_{0\leq t<T}$ is the $(x_0,T,x_T)$-bridge derived from $X$. And this bridge is still strong Markovian \cite{Chaumont2011,Fitzsimmons1993}, with transition densities
\begin{equation}\label{conditionalp}
\begin{split}
p^{x_0,x_T}(y,t|x,s)&=\frac{p(y,t-s|x,0)p(x_T,T-t|y,0)}{p(x_T,T-s|x,0)}\\
&=\frac{p(y,t|x,s)p(x_T,T|y,t)}{p(x_T,T|x,s)},~~0\leq s<t<T.
\end{split}
\end{equation}
Moreover $\mu_{X}^{x_0,x_T}(\{\psi\in C_{x_0}[0,T]\mid \psi(T)=x_T\})=1$.

For a cylinder set $I=\{\psi\in C_{x_0}[0,T]\mid\psi(t_1)\in E_1,\cdots,\psi(t_n)\in E_n\}$ with $0< t_1<t_2<\cdots<t_n<T$ and $E_i$'s being Borel sets of $\mathbb{R}^k$, we have that
\begin{equation}\label{finitedistribution}
\begin{split}
\mu_X^{x_0,x_T}(I)
=& \int_{\{x_i\in E_i,i=1,\cdots,n\}}p^{x_0,x_T}(x_1,t_1|x_0,0)\cdots p^{x_0,x_T}(x_n,t_n|x_{n-1},t_{n-1})dx_1\cdots dx_n\\
=& \int_{\{x_i\in E_i,i=1,\cdots,n\}}\frac{p(x_1,t_1|x_0,0)p(x_T,T|x_1,t_1)}{p(x_T,T|x_0,0)}\frac{p(x_2,t_2|x_1,t_1)p(x_T,T|x_2,t_2)}{p(x_T,T|x_1,t_1)}\\
& \qquad\qquad\qquad\quad\cdots \frac{p(x_{n},t_{n}|x_{n-1},t_{n-1})p(x_T,T|x_{n},t_{n})}{p(x_T,T|x_{n-1},t_{n-1})}dx_1\cdots dx_{n}\\
=&\frac{1}{p(x_T,T|x_0,0)}\int_{\{x_i\in E_i,i=1,\cdots,n\}}p(x_1,t_1|x_0,0)p(x_2,t_2|x_1,t_1)\cdots p(x_T,T|x_{n},t_{n})dx_1\cdots dx_{n}.
\end{split}
\end{equation}

\subsection{SDE representation for Markovian bridges}\label{SDEmarkov}
In this subsection, we represent Markovian bridges via SDEs only with initial values.

Combining equations $(\ref{forward})$, $(\ref{backward})$ and $(\ref{conditionalp})$, the transition probability density function $p^{x_0,x_T}(x,t|x_0,0)$ satisfies the following partial differential equation \cite{Cetin2016}:
\begin{equation*}
\begin{split}
\frac{\partial p^{x_0,x_T}(x,t|x_0,0)}{\partial t}= \nabla\left[ (\nabla U(x) - \sigma^2\nabla \ln p(x_T,T|x,t))p^{x_0,x_T}(x,t|x_0,0) \right]+\frac{1}{2}\sigma^2\triangle p^{x_0,x_T}(x,t|x_0,0).
\end{split}
\end{equation*}
Here and after, the spatial differentiation of $\ln p$ (or $p$) is always implemented with respect to the third variable of $p$, e.g., $\nabla_x \ln p(\cdot, \cdot| x,\cdot)$. Formally, this equation has the form of a Fokker-Planck equation. Thus we can associate to the transition density $p^{x_0,x_T}(x,t|x_0,0)$ a new $k$-dimensional SDE on a certain probability space $(\tilde{\Omega},\mathcal{\tilde F}, \{\tilde{\mathcal{F}}_t\}_{t\ge0}, \mathbb{\tilde P})$ that is rich enough to support a standard $k$-dimensional Brownian motion $\tilde{W}=(\tilde{W}^1,\cdots,\tilde{W}^k)$:
\begin{equation}\label{newsde}
   dY_t=\left[ - \nabla U(Y_t)+\sigma^2\nabla\ln p(x_T,T|Y_t,t) \right]dt+\sigma d\tilde{W}_t,\quad t\in[0,T).
\end{equation}
We refer to \eqref{newsde} as the \emph{bridge SDE} associated with \eqref{SDE}.
This equation was originally obtained by Doob \cite{Doob1957} from the probabilistic point of view and
is known as the \emph{Doob h-transform} of SDE \eqref{SDE}. For the sake of notational simplicity, we introduce the drift of the bridge SDE \eqref{newsde} by
\begin{equation}\label{m-drift}
b(t,x) := - \nabla U(x)+\sigma^2\nabla\ln p(x_T,T|x,t),
\end{equation}
and call it the modified drift.

The existence and uniqueness of weak and strong solutions of (\ref{newsde}) were established in \cite{Cetin2016} under some mild assumptions. Specifically, under Assumption \ref{wellposedfirst} (and the following Assumption \ref{assumptiononEU} if $k\geq2$), the existence and uniqueness of the strong solution of $(\ref{newsde})$ are promised (\cite[Theorem 4.1]{Cetin2016}). Denote, similar as before, by $\mathbb{\tilde P}^{x_0}$ the conditional probability $\mathbb{\tilde P}(\cdot\mid Y_0=x_0)$. Then, it concluded that for each Borel set $E$ of $\mathbb{R}^k$,
\begin{equation}\label{densityY}
\begin{split}
\mathbb{\tilde P}^{x_0}(Y_t\in E)=\int_E\frac{p(y,t|x_0,0)p(x_T,T|y,t)}{p(x_T,T|x_0,0)}dy, \quad 0<t<T,
\end{split}
\end{equation}
and automatically $\mathbb{\tilde P}^{x_0}(Y_T=x_T)=1$.

\begin{assumption}\label{assumptiononEU}
  When $k\geq2$, we assume in addition that $p(x_T,T|x_0,0)>0$ and $-\triangle U \ge \xi$, where $\xi\in\mathbb R$ is a constant, and $h\in C^{1,2}([0,T)\times\mathbb R^k)$ where $h(t,x)=p(x_T,T|x,t)$.
\end{assumption}

\begin{lemma}[Well-posedness of bridge SDEs]\label{strong-Y}
 Under Assumptions \ref{wellposedfirst} and \ref{assumptiononEU}, the SDE \eqref{newsde} has a pathwisely unique strong solution $Y=\{Y_t\}_{t\in[0,T]}$ on the probability space $(\tilde{\Omega},\mathcal{\tilde F}, \{\tilde{\mathcal{F}}_t\}_{t\ge0}, \mathbb{\tilde P}^{x_0})$, which is strong Markov and satisfies $\tilde{\mathbb{P}}^{x_0}(Y_0 = x_0, Y_T=x_T) = 1$.
\end{lemma}

\begin{proof}
The well-posedness of the system \eqref{newsde} in the case $k=1$ has been proved in \cite[Example 2.2]{Cetin2016}, and it can be verified that the conditions therein are all fulfilled by our Assumption \ref{wellposedfirst} (cf. \cite[Proposition 4.1]{Cetin2016}). Thus in the one-dimensional case, the existence and uniqueness of the strong solutions to the Markovian bridge systems are promised in our framework. If $k\geq2$ and $- \triangle U\geq \xi$, according to \cite[Theorem 1]{Metafune2011} we know that, there exists a positive constant $M_0$ depending only on the dimension $k$ and the diffusion coefficient $\sigma$ such that, if $\xi\neq0$,
\begin{equation*}
0\leq p(y,t|x,0)\leq M_0e^{-\frac{\xi}{2}t}\left(\frac{k}{|\xi|}\right)^{-k/2}\left(\cosh\left(-\frac{\xi t}{k}-1\right)\right)^{-k/4},
\end{equation*}
and if $\xi=0$,
\begin{equation*}
0\leq p(y,t|x,0)\leq M_0t^{-k/2}.
\end{equation*}
These estimates hold for all $t>0$ and $x,y\in \mathbb{R}^k$. And these estimates together with Assumption \ref{wellposedfirst}, $h\in C^{1,2}([0,T),\mathbb R^k)$ and $p(x_T,T|x_0,0)>0$ ensure the assumptions in \cite[Theorem 4.1]{Cetin2016} are fulfilled, thus the existence and uniqueness of the strong solution of $(\ref{newsde})$ are promised. The strong Markovian property follows from \cite[Corollary 4.1]{Cetin2016}.
\end{proof}

\begin{remark}\label{remark-1}
(i). The configurations $(\tilde{\Omega},\mathcal{\tilde F}, \{\tilde{\mathcal{F}}_t\}_{t\ge0}, \mathbb{\tilde P})$, $\tilde W$ of bridge SDE \eqref{newsde} do \emph{not} need to coincide with the configuration $(\Omega,\mathcal{F}, \{\mathcal{F}_t\}_{t\ge0}, \mathbb{P})$, $W$ of original SDE \eqref{SDE}, since the ways of obtaining their strong well-posedness in Lemmas \ref{strong-X} and \ref{strong-Y} are both via the Yamada--Watanabe argument, which means that as soon as the probability space is rich enough to support a Brownian motion, the SDE can have a unique strong solution on it, cf. \cite[Section 5.3]{Karatzas1991}.

(ii). Comparing bridge SDE \eqref{newsde} and original SDE \eqref{SDE}, it can be seen that it is the extra potential $\sigma^2\ln p(x_T,T|x,t)$ that ``force'' the process $Y$ to reach $x_T$ at time $t=T$. The physical realization for \eqref{newsde} is to add a virtual potential to the original non-equilibrium thermodynamical system,
to force (almost) all trajectories of the process to pass from an initial metastable state to a particular final metastable state.
\end{remark}

\subsection{Onsager--Machlup functional and bridge measures}\label{OMbridge}

In this subsection, we prove an essential result that is important for our main theorem in the next subsection, we summarize it as the following lemma:
\begin{lemma}[OM functional \& bridge measures]\label{OMderived}
There exists a constant $C>0$, such that for each $\psi\in C^2_{x_0,x_T}[0,T]$,
\begin{equation*}
\mu_X^{x_0,x_T}(\bar{K}_T(\psi,\delta))\sim C\exp\left( -S^{OM}_X(\psi) \right)\ \mu_{\sigma W}^{0,0}(\bar{K}_T(0,\delta)) \quad \text{as } \delta\downarrow0.
\end{equation*}
\end{lemma}

\begin{remark}
  This lemma will be proved by adopting Lemma \ref{cylinderset}, thus the result holds only for closed tubes but not for open tubes.
\end{remark}

Conditioning on the event that the solution process $X$ of \eqref{SDE} hits the point $x_T$ at time $T$, the regular conditional probability measure (i.e., the bridge measure) $\mu_X^{x_0,x_T}$ obeys the following informal stochastic boundary value problem, called \emph{conditioned SDE} \cite{Pinski2012},
\begin{equation}\label{sdetwobridge}\left\{
\begin{aligned}
dX_t&=  - \nabla U(X_t)dt+\sigma dW_t,\\
X_0&=x_0,\quad X_T=x_T,
\end{aligned}\right.
\end{equation}
which should still be understood as the original SDE \eqref{SDE} under the conditional probability $\mathbb{P}^{x_0}(\cdot| x_T =x_T)$.
A significant difference between the bridge SDE \eqref{newsde} and the conditioned SDE \eqref{sdetwobridge} is that the former is only conditioned on initial value while the latter is conditioned on initial and terminal boundary values.

Now, the measure $\mu_X^{x_0,x_T}$ can be characterized via its density with respect to the Brownian bridge measure $\mu_{\sigma W}^{x_0,x_T}$ corresponding to the case $\nabla U\equiv0$. To see this, for the unconditioned process $X$ in \eqref{SDE}, the Girsanov formula gives
\begin{equation}\label{Girsanov}
\begin{split}
\frac{d\mu_X^{x_0}}{d\mu_{\sigma W}^{x_0}}(x)=\exp\left\{ - \int_0^T\frac{\nabla U(x(t))}{\sigma}dx(t)-\frac{1}{2}\int_0^T\frac{|\nabla U(x(t))|^2}{\sigma^2}dt \right\}.
\end{split}
\end{equation}
This expression contains a stochastic integral term. Using It\^{o}'s formula we obtain that
\begin{equation*}
\begin{split}
\frac{d\mu_X^{x_0}}{d\mu_{\sigma W}^{x_0}}(x)&=\exp \left\{  - \frac{1}{\sigma}\left( \frac{U(x(T))-U(x_0)}{\sigma}-\frac{1}{2}\int_0^T\sigma\triangle U(x(t))dt \right)-\frac{1}{2}\int_0^T\frac{|\nabla U(x(t))|^2}{\sigma^2}dt \right\}\\
&=\exp\left\{ - \frac{U(x(T))-U(x_0)}{\sigma^2} + \frac{1}{2}\int_0^T \left(\triangle U(x(t))  - \frac{|\nabla U(x(t))|^2}{\sigma^2} \right)dt \right\}.
\end{split}
\end{equation*}
Now we condition on the boundary value $X_T=x_T$, we find by \cite[Lemma 5.3]{Hairer2007} that
\begin{equation*}
\begin{split}
\frac{d\mu_X^{x_0,x_T}}{d\mu_{\sigma W}^{x_0,x_T}}(x)=C_0\exp\left\{ \frac{1}{2}\int_0^T \left(\triangle U(x(t)) - \frac{|\nabla U(x(t))|^2}{\sigma^2} \right)dt \right\},
\end{split}
\end{equation*}
where $C_0$ is a normalized constant, depending only on $x_0, x_T, l$, $\sigma$ and $U$. This result has been used in \cite{Lu2017,Pinski2012}.

For each $\psi\in C^2_{x_0,x_T}[0,T]$ and $x\in \bar{K}_T(\psi,\delta)$, there exists $h\in \{z\in C_0[0,T]\mid \|z\|_T\leq \delta\}$ such that
\begin{equation*}
x=\psi+h,
\end{equation*}
and
\begin{equation*}
\begin{split}
&\ \left| \int_0^T\left(\triangle U  -  \frac{|\nabla U|^2}{\sigma^2}\right)(x(t))dt-\int_0^T\left(\triangle U -  \frac{|\nabla U|^2}{\sigma^2}\right)(\psi(t))dt \right|\\
=&\ \left|\int_0^T\left(\triangle U  - \frac{|\nabla U|^2}{\sigma^2}\right)(\psi(t)+h(t))dt-\int_0^T\left(\triangle U - \frac{|\nabla U|^2}{\sigma^2}\right)(\psi(t))dt\right|\\
\leq&\ C_1 T\delta.
\end{split}
\end{equation*}
where
$$C_1=\sup_{x\in\bar{K}_T(\psi,\delta)} \sup_{t\in[0,T]} \left|\nabla \left( \triangle U  -  \frac{|\nabla U|^2}{\sigma^2} \right)(x(t))\right|.$$
So we know that
\begin{equation}\label{est-1}
\begin{split}
\mu_X^{x_0,x_T}(\bar{K}_T(\psi,\delta))
=&\ \int_{x\in \bar{K}_T(\psi,\delta)}C_0\exp\left\{ \frac{1}{2}\int_0^T\left(\triangle U(x(t))  - \frac{|\nabla U(x(t))|^2}{\sigma^2}\right)dt\right\}d\mu_{\sigma W}^{x_0,x_T}(x)\\
\leq&\ C_0\exp\left\{C_1T\delta + \frac{1}{2}\int_0^T\left(\triangle U(\psi(t))  - \frac{|\nabla U(\psi(t))|^2}{\sigma^2}\right)dt\right\}\mu_{\sigma W}^{x_0,x_T}(\bar{K}_T(\psi,\delta)).
\end{split}
\end{equation}

Let $p_W(\cdot,t|\cdot,s)$ $(0\leq s<t\leq T)$ denote the transition density of Brownian motion $\sigma W$. For each $\psi\in C^2_{x_0,x_T}[0,T]$ and tube size $\delta>0$, according to Lemma \ref{cylinderset} and equation \eqref{finitedistribution} we have that
\begin{equation}\label{est-2}
\begin{split}
&\ p_W(x_T,T|x_0,0)\mu_{\sigma W}^{x_0,x_T}(\bar{K}_T(\psi,\delta))\\
=&\ p_W(x_T,T|x_0,0)\lim_{n\rightarrow\infty}\mu_{\sigma W}^{x_0,x_T}(\bar{I}_n(\psi,\delta))\\
=&\ \lim_{n\rightarrow\infty}\int_{\{z_i\in \bar{B}(\psi(t_i),\delta),i=1,\cdots,n\}}\left(\frac{1}{\sqrt{(2\pi)^k\sigma^2\Delta_i t}}\right)^{n+1}\exp\left\{-\sum_{i=1}^{n+1}\frac{|z_i-z_{i-1}|^2}{2\sigma^2\Delta_i t}\right\}dz_1\cdots dz_{n}\\
&\ (z_0=x_0,z_{n+1}=x_T)\\
=&\ \lim_{n\rightarrow\infty}\int_{\{y_i\in \bar{B}(0,\delta),i=1,\cdots,n\}}\left(\frac{1}{\sqrt{(2\pi)^k\sigma^2\Delta_i t}}\right)^{n+1}\\
&\ \cdot \exp\left\{-\sum_{i=1}^{n+1}\frac{|y_i+\psi(t_i)-y_{i-1}-\psi(t_{i-1})|^2}{2\sigma^2\Delta_i t}\right\}dy_1\cdots dy_{n}\\
&\ (\mbox{Variable substitution}:~y_i=z_i-\psi(t_{i}),~i=0,\cdots,n+1,~\mbox{in particular},~y_0=y_{n+1}=0.)\\
=&\ \lim_{n\rightarrow\infty} \exp\left\{-\sum_{i=1}^{n+1}\frac{|\psi(t_i)-\psi(t_{i-1})|^2}{2\sigma^2\Delta_i t}\right\}\int_{\{y_i\in \bar{B}(0,\delta),i=1,\cdots,n\}}\left(\frac{1}{\sqrt{(2\pi)^k\sigma^2\Delta_i t}}\right)^{n+1}\\
&\ \cdot \exp\left\{-\sum_{i=1}^{n+1}\frac{|y_i-y_{i-1}|^2}{2\sigma^2\Delta_i t}\right\}\exp\left\{-\sum_{i=1}^{n+1}\frac{(y_i-y_{i-1})\cdot(\psi(t_i)-\psi(t_{i-1}))}{\sigma^2\Delta_i t}\right\}dy_1\cdots dy_{n}\\
\leq&\ \exp\left\{\frac{T\delta\|\ddot{\psi}\|_T}{\sigma^2}\right\}\lim_{n\rightarrow\infty}\exp\left\{-\sum_{i=1}^{n+1}\frac{|\psi(t_i)-\psi(t_{i-1})|^2}{2\sigma^2\Delta_i t}\right\}\\
&\ \int_{\{y_i\in \bar{B}(0,\delta),i=1,\cdots,n\}}\left(\frac{1}{\sqrt{2\pi\sigma^2\Delta_i t}}\right)^{n+1}\exp\left\{-\sum_{i=1}^{n+1}\frac{|y_i-y_{i-1}|^2}{2\sigma^2\Delta_i t}\right\}dy_1\cdots dy_{n}\\
=&\ p_W(y_n,l|y_0,0)\exp\left\{\frac{T\delta\|\ddot{\psi}\|_T}{\sigma^2}\right\}\lim_{n\rightarrow\infty} \exp\left\{-\frac{1}{2\sigma^2}\sum_{i=1}^{n+1} \left|\frac{\psi(t_i)-\psi(t_{i-1})}{\Delta_i t}\right|^2\Delta_i t\right\}\mu_{\sigma W}^{y_0,y_n}(\bar{I}_n(0,\delta))\\
=&\ p_W(0,T|0,0)\exp\left\{\frac{T\delta\|\ddot{\psi}\|_T}{\sigma^2}\right\} \exp\left\{-\frac{1}{2}\int_0^T\frac{|\dot{\psi}|^2}{\sigma^2}dt\right\}\mu_{\sigma W}^{0,0}(\bar{K}_T(0,\delta)),
\end{split}
\end{equation}
where $\Delta_i t=t_i-t_{i-1}$, and we have used the discrete version of integration by parts to estimate the cross terms:
\begin{equation*}
\begin{split}
&\ \left|\sum_{i=1}^{n+1}\frac{(y_i-y_{i-1})\cdot(\psi(t_i)-\psi(t_{i-1}))}{\Delta_i t}\right|\\
=&\ \left|y_{n+1}\frac{\psi(t_{n+1})-\psi(t_{n})}{\Delta_i t}-y_0\frac{\psi(t_1)-\psi(t_{0})}{\Delta_i t}+\sum_{i=1}^{n}y_i\frac{\frac{\psi(t_i)-\psi(t_{i-1})}{\Delta_i t}-\frac{\psi(t_{i+1})-\psi(t_i)}{\Delta_i t}}{\Delta_i t}\Delta_i t\right|\\
\leq&\ \sum_{i=1}^{n}|y_i|\left|\frac{\frac{\psi(t_i)-\psi(t_{i-1})}{\Delta_i t}-\frac{\psi(t_{i+1})-\psi(t_i)}{\Delta_i t}}{\Delta_i t}\Delta_i t\right|\\
\leq&\ \delta\|\ddot{\psi}\|_T\sum_{i=1}^{n}|\Delta_i t|\leq T\delta \|\ddot{\psi}\|_T.
\end{split}
\end{equation*}

Now we combine \eqref{est-1} and \eqref{est-2} to derive that
\begin{equation*}
\begin{split}
\mu_X^{x_0,x_T}(\bar{K}_T(\psi,\delta))
\leq&\ C_0\exp\left\{C_1T\delta  + \frac{1}{2}\int_0^T\left(\triangle U(\psi(t)) - \frac{|\nabla U(\psi(t))|^2}{\sigma^2}\right)dt\right\}\\
&\ \cdot \frac{p_W(0,T|0,0)}{p_W(x_T,T|x_0,0)}\exp\left\{\frac{T\delta\|\ddot{\psi}\|_T}{\sigma^2}\right\}\exp\left\{-\frac{1}{2}\int_0^T\frac{|\dot{\psi(t)}|^2}{\sigma^2}\right\}\mu_{\sigma W}^{0,0}(\bar{K}_T(0,\delta))\\
=&\ C_0\exp\left\{C_1T\delta+ \frac{T\delta\|\ddot{\psi}\|_T}{\sigma^2}\right\} \frac{p_W(0,T|0,0)}{p_W(x_T,T|x_0,0)}\\
&\ \cdot \exp\left\{-\frac{1}{2} \int_0^T\left(\frac{|\dot{\psi}(t)|^2}{\sigma^2}  - \triangle U(\psi(t))+\frac{|\nabla U(\psi(t))|^2}{\sigma^2}\right)dt\right\}\mu_{\sigma W}^{0,0}(\bar{K}_T(0,\delta))\\
=&\ C_0\exp\left\{C_1T\delta+\frac{T\delta\|\ddot{\psi}\|_T}{\sigma^2}  + \int_0^T \frac{\dot{\psi}(t) \cdot \nabla U(\psi(t))}{\sigma^2} dt \right\} \frac{p_W(0,T|0,0)}{p_W(x_T,T|x_0,0)}\\
 &\ \cdot \exp\{-S_X^{OM}(\psi)\}\mu_{\sigma W}^{0,0}(\bar{K}_T(0,\delta)) \\
=&\ C_0\exp\left\{C_1T\delta+\frac{T\delta\|\ddot{\psi}\|_T}{\sigma^2}  + \frac{U(x_T)-U(x_0)}{\sigma^2}\right\}\frac{p_W(0,T|0,0)}{p_W(x_T,T|x_0,0)}\\
&\ \cdot \exp\{-S_X^{OM}(\psi)\}\mu_{\sigma W}^{0,0}(\bar{K}_T(0,\delta)).
\end{split}
\end{equation*}
Similarly, we have that
\begin{equation*}
\begin{split}
\mu_X^{x_0,x_T}(\bar{K}_T(\psi,\delta))
\geq&\ C_0 \exp\left\{-C_1T\delta-\frac{T\delta\|\ddot{\psi}\|_T}{\sigma^2}  + \frac{U(x_T)-U(x_0)}{\sigma^2}\right\} \frac{p_W(0,T|0,0)}{p_W(x_T,T|x_0,0)} \\
&\ \cdot \exp\{-S_X^{OM}(\psi)\} \mu_{\sigma W}^{0,0}(\bar{K}_T(0,\delta)).
\end{split}
\end{equation*}
These give the desired results of the Lemma \ref{OMderived} with
\begin{equation*}
  C = C_0 \exp\left\{ \frac{U(x_T)-U(x_0)}{\sigma^2}\right\} \frac{p_W(0,T|0,0)}{p_W(x_T,T|x_0,0)}.
\end{equation*}

\begin{remark}
(i). In \cite{Durr1978,Ikeda1980} the OM functional was derived from the measure $\mu_X^{x_0}$ by using Girsanov formula twice. This works since $\mu_X^{x_0}$ is absolutely continuous with respect to $\mu_{\sigma W}^{x_0}$, and both measures are quasi-translation invariant (see \cite{Durr1978} for details). However, the bridge measures $\mu_X^{x_0,x_T}$ and $\mu_{\sigma W}^{x_0,x_T}$ are not quasi-translation invariant in $C_{x_0}[0,T]$. This is the difference between our method and the methods in \cite{Durr1978,Ikeda1980} to derive the OM functional.

(ii). Note that when we applied Girsanov transform in \eqref{Girsanov}, the trajectory $x(\cdot)$ at the right hand side (RHS) is understood as a trajectory of Brownian motion under $\mu_{\sigma W}^{x_0}$, instead of that of the solution process $X$.

(iii).
The authors of \cite{MP16} used a one-dimensional potential $U$ that has two wells, one narrow and one broad, to investigate trajectory sampling schemes. They observed that the Laplacian term $U''$ has a huge influence on the OM functional. In an ensemble that was generated from a hybrid Monte Carlo sampling scheme, the resulting trajectories spent the largest amount of time in the narrow well, where the curvature of $U$ is the largest. They pointed out that such a result is \emph{unphysical}, by the reason that such trajectories are inconsistent with the equilibrium thermodynamical distribution where the particle would be expected to be found in the broad well in most of the time. Other potentials that appeared to be unphysical may also be found in \cite{PS10}. Possible interpretations for the origin of unphysical phenomena of MPPs/MPTPs are the following: firstly, minimizing an OM functional is in general a non-convex optimization problem as its Lagrangian is non-convex (at least for double-well potentials), and the Euler--Lagrange equation for an OM functional, which may have many solutions, is merely a necessary but not sufficient condition for its minimizers, so numerical schemes based on this approach can only find (an approximation of) some (local) minimizers which we are not able to know if is global, except for some special cases like the linear one, cf. Section \ref{application}; secondly, the first author of the present paper showed with his collaborators in \cite{HCW21} that it can occur that the (local) minimizers for the OM functional with a double-well potential have multiple back-and-forth transitions when the given transition time $T$ is large; thirdly, it was proved in \cite{DLL21,HV08} that when transition time $T$ goes to infinity (which is the case for \cite{MP16,PS10} since they considered equilibrium states), the MPTP converges under the \emph{Fr\'echet distance} which allows nonhomogeneous time parametrizations on the path space without affecting the values of OM functional, thus the time duration in wells is unimportant in this context and cannot be a good reason for the unphysicalness.
\end{remark}

\subsection{A sufficient and necessary characterization for most probable transition paths}\label{MPTPequivalent}


Note that the drift term in \eqref{newsde} is singular at time $t=T$. In fact, it is this singular attractive potential that forces all the trajectories of $Y$ to $x_T$ at time $T$ \cite{Cetin2016}. In other words, the process $Y$ must ``transit'' to $x_T$ at time $T$. So formally we do not need to emphasize the transition behavior for the process $Y$. That is, the problem reduces to: among all possible smooth paths starting at $x_0$, which one is most probable for the solution process $Y$ of $(\ref{newsde})$?

The solution process $Y$ of $(\ref{newsde})$ induces a measure $\mu_Y^{x_0}$ on $\mathcal{B}^{x_0}_{[0,T]}$ as its law by
\begin{equation*}
\begin{split}
\mu_Y^{x_0}(B)&=\mathbb{\tilde P}^{x_0}\left( \{w\in\tilde{\Omega}\mid Y (\omega)\in B\} \right),\quad B\in\mathcal{B}^{x_0}_{[0,T]}.
\end{split}
\end{equation*}

\begin{remark}
Since we have known that $\mathbb{\tilde P}^{x_0}(Y_T=x_T)=1$, the most probable paths of system $(\ref{newsde})$ must reach point $x_T$ at time $T$.
\end{remark}

To figure out the relation between the most probable transition paths of $X$ and the most probable paths of $Y$, we need the help of the bridge measure $\mu_{X}^{x_0,x_T}$. Note that although the two systems $(\ref{newsde})$ and \eqref{sdetwobridge} may be defined on different probability spaces, their associate induced measures $\mu_Y^{x_0}$ and $\mu_X^{x_0,x_T}$ are defined on the same path space $(C_{x_0}[0,T],\mathcal{B}^{x_0}_{[0,T]})$. The following lemma gives the relation between measures $\mu_Y^{x_0}$ and $\mu_{X}^{x_0,x_T}$.

\begin{lemma}[Bridge measures $=$ laws of bridge SDEs]\label{equivalence}
The two measures $\mu_{X}^{x_0,x_T}$ and $\mu_Y^{x_0}$ coincide.
\end{lemma}
\begin{proof}
The equations $(\ref{conditionalp})$ and $(\ref{densityY})$ show us that the transition density functions of process $X$ under $\mathbb{P}^{x_0}(\cdot\mid X_T=x_T)$ and process $Y$ under $\mathbb{\tilde P}^{x_0}$ are identical. Let $I=\{\psi\in C_{x_0}[0,T]\mid \psi(t_1)\in E_1,\cdots,\psi(t_n)\in E_n\}$ be a cylinder set with $0\leq t_1<t_2<\cdots<t_n\le T$ and Borel sets $E_i \subset \mathbb{R}^k$. In the case that $t_n<T$, we have the following equalities:
\begin{equation*}
\begin{split}
\mu_Y^{x_0}(I)
=&\ \mathbb{\tilde P}^{x_0}(Y_{t_i}\in E_i,i=1,\cdots,n)\\
=&\ \int_{E_1}\cdots\int_{E_n}p^{x_0,x_T}(y_1,t_1|x_0,0)\cdots p^{x_0,x_T}(y_n,t_n|y_{n-1},t_{n-1})dy_1\cdots dy_n\\
=&\ \mathbb{P}^{x_0}(X_{t_i}\in E_i,~i=1,\cdots,n|~X_T=x_T)\\
=&\ \mu_{X}^{x_0,x_T}(I).
\end{split}
\end{equation*}
In the case that $t_n=T$, due to the fact $\mathbb{\tilde P}^{x_0}(Y_T=x_T)=1$ and $\mathbb{P}^{x_0}(X_T=x_T|X_T=x_T)=1$, we know that
\begin{equation*}
\begin{aligned}
\mu_Y^{x_0}(I)&=\mu_Y^{x_0}(\{\psi\in C_{x_0}[0,T]\mid\psi(t_i)\in E_i,i=1,\cdots,n-1\})\\
&=\mu_{X}^{x_0,x_T}(\{\psi\in C_{x_0}[0,T]\mid\psi(t_i)\in E_i,i=1,\cdots,n-1\})=\mu_{X}^{x_0,x_T}(I),\quad\mbox{if }x_T\in E_n,\\
\mu_Y^{x_0}(I)&=0=\mu_{X}^{x_0,x_T}(I),\quad\mbox{if }x_T\notin E_n.
\end{aligned}
\end{equation*}
Thus the measures $\mu_Y^{x_0}$ and $\mu_{X}^{x_0,x_T}$ coincide on all cylinder sets of $C_{x_0}[0,T]$. Recall that, the field $\mathcal{B}^{x_0}_{[0,T]}$ is the $\sigma$-field generated by all cylinder sets. By the Carath\'{e}odory measure extension theorem, we know that the two probability measures $\mu_Y^{x_0}$ and $\mu_{X}^{x_0,x_T}$ coincide on $\mathcal{B}^{x_0}_{[0,T]}$. This completes the proof.
\end{proof}

\begin{remark}\label{remark-2}

The bridge measure $\mu_{X}^{x_0,x_T}$ contains the thermodynamic (or statistical) information of all trajectories of the underlying process $X$ that happens to end at the particular metastable state $x_T$, while the law $\mu_Y^{x_0}$ of bridge SDE \eqref{newsde} contains the information of the process $Y$ that are forced to end at the state $x_T$ by an additional virtual potential, cf. Remark \ref{remark-1}. Lemma \ref{equivalence} tells that the thermodynamic information of the above two different thermodynamic systems is equivalent.
\end{remark}

Under Lemmas \ref{OMderived} and \ref{equivalence}, for $\psi_1,\psi_2\in C^2_{x_0,x_T}[0,T]$ we have that
\begin{equation*}
\begin{split}
\lim_{\delta\downarrow0} \frac{\mu_Y^{x_0}(\bar K_T(\psi_1,\delta))}{\mu_Y^{x_0}(\bar K_T(\psi_2,\delta))} =&\ \lim_{\delta\downarrow0} \frac{\mu_X^{x_0,x_T}(\bar K_T(\psi_1,\delta))}{\mu_X^{x_0,x_T}(\bar K_T(\psi_2,\delta))}
 = \exp(S^{OM}_X(\psi_2)-S^{OM}_X(\psi_1)) =\lim_{\delta\downarrow0} \frac{\mu_X^{x_0}(\bar{K}_T(\psi_1,\delta))}{\mu_X^{x_0}(\bar{K}_T(\psi_2,\delta))}.
\end{split}
\end{equation*}
Thus we know that if we want to find the most probable transition path of system \eqref{SDE}, an alternative way is to find the most probable path of system \eqref{newsde}. Moreover, thanks to Lemma \ref{MPP}, the most probable path of \eqref{newsde} can be characterized by its corresponding deterministic ODE. We summarize these conclusions in the following theorem, as the main result of this paper. 

\begin{theorem}[Sufficient \& necessary characterization for MPTPs]\label{maintheory}
Suppose that the assumptions \ref{wellposedfirst} and \ref{assumptiononEU} hold. \\
(i). The most probable transition path(s) of the system \eqref{SDE} coincide(s) with the most probable transition path(s) of the associated bridge SDE (\ref{newsde}). \\
(ii). A path $\psi^*\in C^2_{x_0,x_T}[0,T]$ is a most probable transition path of \eqref{SDE} if and only if it solves the following first-order ODE
\begin{equation}\label{MPTP}
   d\psi^*(t)=\left[- \nabla U(\psi^*(t))+\sigma^2\nabla\ln p(x_T,T|\psi^*(t),t) \right]dt, \quad t\in(0,T),\quad \psi^*(0)=x_0,
\end{equation}
where $p(\cdot,\cdot|\cdot,\cdot)$ is the transition density of the solution process of \eqref{SDE}.
\end{theorem}

\begin{remark}
(i). By the same reason as Remark \ref{rmk-a}-(i), under the assumption of the above theorem, the existence and uniqueness of a most probable transition path of \eqref{SDE} are equivalent to the existence and uniqueness of a solution of \eqref{MPTP} in $C^2_{x_0}[0,T]$. In a way, we transform a second-order ODE, the Euler--Lagrange equation associated with the OM functional, to a first-order one \eqref{MPTP}.

(ii). As we have mentioned in the introduction, neither the most probable transition paths of system \eqref{SDE} nor the most probable paths of the bridge SDE \eqref{newsde} are genuine trajectories since they are differentiable curves. This means the MPPs/MPTPs are statistical predictions rather than actual observations or measurements. Theorem \ref{maintheory} infers that such statistical predictions for the two thermodynamic systems in Remark \ref{remark-2} are equivalent. In many cases, the paths matter, however, changes in the thermodynamic properties depend only on the initial and final states and not upon the paths \cite{BS20}.
\end{remark}

Since the first-order ODE \eqref{MPTP} is sufficient and necessary for MPTPs, it must fulfill the following Euler--Lagrange equation of the OM functional \eqref{OM},
\begin{equation}\label{EL}
\left\{ \begin{aligned}
  & \ddot{\psi} +\frac{1}{2} \nabla \left(\sigma^2\Delta U(\psi^*)-|\nabla U(\psi)|^2\right)=0,\\
  & \psi(0)=x_0,\quad \psi(T)=x_T.
\end{aligned}\right.
\end{equation}
We differentiate both sides of \eqref{MPTP} as follows,
\begin{equation*} 
\begin{aligned}
   \ddot{\psi}^*(t) =&\  - \nabla^2 U(\psi^*(t))  \cdot \dot{\psi}^*(t) + \sigma^2\frac{d }{dt}\nabla\ln  p(x_T,T|\psi^*(t),t) \\
   =&\  \nabla^2 U(\psi^*(t)) \cdot \left[ \nabla U(\psi^*(t)) - \sigma^2\nabla\ln  p(x_T,T|\psi^*(t),t) \right] +\sigma^2\frac{d }{dt}\nabla\ln  p(x_T,T|\psi^*(t),t).
\end{aligned}
\end{equation*}
Comparing the above equation with \eqref{EL} and denoting
\begin{equation}\label{beta}
  \beta(t) := \sigma^2 \nabla\ln  p(x_T,T|\psi^*(t),t), \quad t\in[0,T),
\end{equation}
we immediately have
\begin{corollary}
  Under the same assumptions as Theorem \ref{maintheory}, suppose that $\psi^*\in C^2_{x_0,x_T}[0,T]$ is a most probable transition path of \eqref{SDE}. Let $p(\cdot,\cdot|\cdot,\cdot)$ be the transition density of \eqref{SDE}. Then the curve $\beta: [0,T)\to \R^k$ defined in \eqref{beta} satisfies the following ODE:
\begin{equation}\label{ODE-momentum}
    \dot \beta -\nabla^2 U(\psi^*) \cdot \beta + \frac{\sigma^2}{2} \nabla \Delta U(\psi^*) = 0, \quad \beta(0) = \sigma^2 \nabla\ln  p(x_T,T|x_0,0).
\end{equation}
\end{corollary}

The ODE \eqref{ODE-momentum}, or equivalently, the fact that solutions of the first-order ODE \eqref{MPTP} also solve the EL equation \eqref{EL}, cannot be apparently verified in a straightforward way in general, since the information of the transition density $p$ is obscure. Our results amount to an indirect probabilistic way to prove the implication of the first-order ODE \eqref{MPTP} to the EL equation \eqref{EL}. We will verify this implication for some special cases either analytically or numerically in the next two sections.


\section{Exact and approximate equations for most probable transition paths}\label{application}

At this stage, to characterize the most probable transition path of system \eqref{SDE}, what we need to do is to evaluate the transition density $p$. But it is impossible to get an analytical expression for this density for general nonlinear stochastic systems. However, in some special cases, we can obtain the analytical expression or an approximation for the transition density.

In this section, we will consider three classes of stochastic systems and study their transition densities: linear systems, Hongler's model \cite{hongler1981study} and general nonlinear systems.

\subsection{Linear systems}
Consider the following linear equation \cite[Section 5.6]{Karatzas1991}:
\begin{equation}\label{linearequation}\left\{
\begin{aligned}
dX_t&=[GX_t+a]dt+\sigma dW_t,\quad 0\leq t<\infty\\
X_0&=x_0,
\end{aligned}\right.
\end{equation}
where $W$ is a $k$-dimensional Brownian motion independent of the initial vector $x_0\in\mathbb R^k$, $G$ is a $k\times k$ constant nondegenerate symmetric matrix, $a$ is a $(k\times 1)$ matrix and the noise intensity $\sigma$ is a positive constant. Under these settings, it is easy to check that the drift term is the gradient of the potential function
\begin{equation}\label{potential}
  U(x) = -\textstyle{\frac{1}{2}} x^TGx - a^T x + \mathrm{constant},
\end{equation}
Since an affine transformation of the action functional does not change its minimizers, we have
\begin{proposition}
  The most probable transition path of the linear system \eqref{linearequation} connecting two given states $x_0$ and $x_T$, is a path $\psi^*$ that makes the following  Freidlin--Wentzell (FW) action functional achieve its minimum value in the path space $C^2_{x_0,x_T}[0,T]$,
\begin{equation}\label{FW}
\begin{split}
S_{X}^{FW}(\psi)=\frac{1}{2}\int_0^T|\dot{\psi}(s) + \nabla U(\psi(s))|^2ds.
\end{split}
\end{equation}
Consequently, $\psi^*$ satisfies the following Euler--Lagrange equation associated to \eqref{FW},
\begin{equation}\label{EL-linear}
    \ddot \psi = G^2 \psi + Ga, \quad \psi(0) = x_0,\ \psi(T) = x_T.
\end{equation}
\end{proposition}

The solution of the system \eqref{linearequation} has the following representation,
\begin{equation}\label{sol-linear-sde}
  \begin{split}
    X_t 
    &= e^{tG} x_0 + G^{-1} \left( e^{tG}a - a \right) + \sigma\int_0^t e^{(t-s)G} dW_s, \quad 0\leq t<\infty.
  \end{split}
\end{equation}
where $\{e^{tG}\}_{t\ge0}$ is the matrix semigroup of $G$, i.e., $\Phi(t) = e^{tG}$ is the solution of the differential equation
\begin{equation*}
\left\{
\begin{aligned}
\dot{\Phi}(t)&=G\Phi(t)\\
\Phi(0)&=I,
\end{aligned}\right.
\end{equation*}
where $I$ is the $k\times k$ identity matrix.
Clearly, the solution $X$ in \eqref{sol-linear-sde} is a Gaussian process, whose mean vector and covariance matrix are given by
\begin{gather*}
\mu(t;x_0) \triangleq \mathbb{E}X_t= 
e^{tG} x_0 + G^{-1} \left( e^{tG}a - a \right),\\
\Sigma(t)\triangleq \mathbb{E}\left[(X_t-\mathbb{E}X_t)(X_t-\mathbb{E}X_t)^T\right] = 
\frac{\sigma^2}{2} G^{-1} \left( e^{2tG} - I \right),
\end{gather*}
Hence, the probability density function of $X$ is
\begin{equation*}
\begin{split}
 p(x,t|x_0,0)=&\ \frac{1}{(2\pi)^{k/2} \sqrt{\det \Sigma(t) }} \exp\Bigg\{ -\frac{1}{2}\left[x-\mu(t;x_0)\right]^T \Sigma(t)^{-1} \left[x-\mu(t;x_0)\right]\Bigg\},
\end{split}
\end{equation*}
and the associated curve $\beta$ in \eqref{beta} is
\begin{equation}\label{beta-linear}
  \begin{split}
    \beta(t) 
    &= 2 G \left( e^{2(T-t)G} - I \right)^{-1} e^{(T-t) G} \left[ x_T- e^{(T-t) G} \psi^*(t)- G^{-1} \left( e^{(T-t) G} a- a \right) \right].
  \end{split}
\end{equation}



\begin{corollary}\label{linearMPTP}
The most probable transition path of the linear system \eqref{linearequation} is described by the following ordinary differential equation
\begin{equation}\label{linear-first-ode}
\left\{
  \begin{aligned}
   \dot{\psi}^*(t)
   =&\ G\psi^*(t)+a + 2 G \left( e^{2(T-t)G} - I \right)^{-1} e^{(T-t) G} \left[ x_T- e^{(T-t) G} \psi^*(t)- G^{-1} \left( e^{(T-t) G} a- a \right) \right], \ t\in[0,T), \\
   \psi^*(0)=&\ x_0.
  \end{aligned}\right.
\end{equation}
\end{corollary}

\begin{remark}
  Observe that for the linear case, both the EL equation \eqref{EL-linear} and the first-order ODE \eqref{linear-first-ode} are independent of the noise intensity constant $\sigma$. This may not be true for general nonlinear cases.
\end{remark}
Since, by matrix calculus,
\begin{equation*}
  \begin{split}
    \frac{d}{dt} \left( e^{2(T-t)G} - I \right)^{-1} &= -\left( e^{2(T-t)G} - I \right)^{-1} \left[ \frac{d}{dt} \left( e^{2(T-t)G} - I \right) \right] \left( e^{2(T-t)G} - I \right)^{-1} \\
    &= 2 \left( e^{2(T-t)G} - I \right)^{-1} G e^{2(T-t)G} \left( e^{2(T-t)G} - I \right)^{-1},
  \end{split}
\end{equation*}
we differentiate the function $\beta$ in \eqref{beta-linear} and get
\begin{equation*}
  \begin{split}
    \dot\beta(t) =&\ - G \beta(t) + 2 \left( e^{2(T-t)G} - I \right)^{-1} G e^{2(T-t)G} \beta(t) \\
    &\ + 2 G \left( e^{2(T-t)G} - I \right)^{-1} e^{(T-t) G} \left[ G e^{(T-t) G} \psi^*(t) - e^{(T-t) G} ( G\psi^*(t)+a + \beta(t) ) + e^{(T-t) G} a \right] \\
    =&\ -G \beta(t),
  \end{split}
\end{equation*}
which verifies \eqref{ODE-momentum}.
We then differentiate both sides of \eqref{linear-first-ode} as follows,
\begin{equation*} 
  \ddot{\psi}^*(t) = G\dot\psi^*(t) + \dot\beta(t) = G[ G\psi^*(t)+a + \beta(t) ] - G \beta(t) = G^2 \psi^*(t)+ Ga,
\end{equation*}
that is, \eqref{linear-first-ode} implies the EL equation \eqref{EL-linear}.

\subsection{Hongler's model}\label{sec-4-2}
Hongler's model \cite{hongler1981study} is described by the following 1-dimensional nonlinear SDE:
\begin{equation}\label{hongler}
    \begin{aligned}
        dX_t=-\frac{dU}{dx}(X_t)dt+dW_t,\quad X_0=x_0,
    \end{aligned}
\end{equation}
where
\begin{equation}\label{honglerU1st}
    \begin{aligned}
        U(x)= \frac{\sqrt{A}}{2} x^2-\ln \left[ {}_1F_1\left(\frac{B}{4\sqrt{A}}+\frac{1}{4};\frac{1}{2};\sqrt{A}x^2 \right) \right],
    \end{aligned}
\end{equation}
and ${}_1F_1$ stands for the confluent hypergeometric function (see Appendix \ref{appendixhongler}) and the constants $A$ and $B$ are such that 
\begin{equation*} 
    \begin{aligned}
        A\geq0,\quad B\geq-\sqrt{A}.
    \end{aligned}
\end{equation*}
The transition density function of the solution process of \eqref{hongler} is obtained therein as
\begin{equation}\label{densityhongler}
\begin{split}
p(x,t|y,s)=&\ \exp\left \{ \frac{\sqrt{A}}{e^{-2\sqrt{A}(t-s)}-1} \left(x-ye^{-\sqrt{A}(t-s)}\right)^2 -\frac{B}{2}(t-s) \right\} {}_1F_1\left(\frac{B}{4\sqrt{A}}+\frac{1}{4};\frac{1}{2};\sqrt{A}x^2 \right)\\
&\ \times \left[ \sqrt{ \frac{2\pi}{\sqrt{A}} \sinh\left(\sqrt{A}(t-s)\right)        }    {}_1F_1\left(\frac{B}{4\sqrt{A}}+\frac{1}{4};\frac{1}{2};\sqrt{A}y^2 \right)\right]^{-1}.
\end{split}
\end{equation}
Hongler's model can be generalized to higher dimensional cases, see e.g. \cite{garrido1985exact}.

Substituting the above density \eqref{densityhongler} to \eqref{MPTP} we obtain that the ODE for the MPTPs of Hongler's SDE \eqref{hongler} is given by
\begin{equation}\label{honglerMPTP}
\begin{aligned}
    \dot \psi^*(t)=&\ -\sqrt{A}\psi^*(t) +\frac{2\sqrt{A} e^{-\sqrt{A}(T-t)}}{1-e^{-2\sqrt{A}(T-t)}  } \left( x_T-\psi^*(t)e^{-\sqrt{A}(T-t)} \right),\quad t\in(0,T),\quad \psi^*(0)=x_0.
    \end{aligned}
\end{equation}
See Appendix \ref{appendixhongler} for detailed derivations of \eqref{honglerMPTP}.
An interesting observation here is that equation \eqref{honglerMPTP} has exactly the same form as the first-order ODE \eqref{linear-first-ode} of MPTPs for the linear case, with $G=-\sqrt A$ and $a=0$, although the system \eqref{hongler} is nonlinear and its transition density \eqref{densityhongler} is much more complicated than the linear case.

On the other hand, it has been observed in \cite[Section 2]{hongler1981study} that, the Euler--Lagrange equation for the OM functional of \eqref{hongler} also has the same form as that of the linear case \eqref{EL-linear}, with $G=-\sqrt A$ and $a=0$, i.e.,
\begin{equation}\label{hongler-EL}
\ddot \psi=A\psi, \quad \psi(0) = x_0,\ \psi(T) = x_T,
\end{equation}
of which the general solution is given by
\begin{equation*}
\psi(t)=C_1 e^{-\sqrt{A}t}+C_2e^{\sqrt{A}t},
\end{equation*}
where $C_1,~C_2$ are two constants determined by boundary conditions. Therefore, the implication of the first-order ODE \eqref{honglerMPTP} to the EL equation \eqref{hongler-EL}, for Hongler's nonlinear model, follows in the same way as the linear case. We also remark from the EL equation (or from \eqref{honglerMPTP}) that the MPTPs are independent of the choice of the constant $B$.

In \cite{hongler1982exact}, Hongler studied the minus of the potential \eqref{honglerU1st}, that is,
\begin{equation*} 
    \begin{aligned}
        U(x)=2\ln \left\{e^{-\frac{1}{2}x^2}  {}_1F_1\left(\frac{A}{2}+\frac{1}{4},\frac{1}{2},\frac{1}{2}x^2 \right) \right\},
    \end{aligned}
\end{equation*}
with condition $A\geq-\frac{1}{2}$. For such potential, the transition density is also exact but in a more involved form. 

In \cite{romero1990diffusion}, the following potential was considered:
\begin{equation*} 
    \begin{aligned}
        U(x)=-2\ln \left\{\left(\frac{1}{2}x^2\right)^{a+\frac{1}{4}}e^{-\frac{1}{4}x^2}  {}_1F_1\left(A,B,\frac{1}{2}x^2 \right) \right\},
    \end{aligned}
\end{equation*}
where
\begin{equation*} 
    \begin{aligned}
        A=\frac{1}{2}+a-\frac{1}{8}b,\quad B=1+2a,
    \end{aligned}
\end{equation*}
with
\begin{equation*} 
    \begin{aligned}
        a>-\frac{1}{2},\quad A\geq0.
    \end{aligned}
\end{equation*}
This can be regarded as a generalization of Hongler's original model since when $a=-\frac{1}{4}$ the potential is the same as \eqref{honglerU1st}. For the case in which $a\neq-\frac{1}{4}$, the stochastic system is defined on either $\mathbb R_+$ or $\mathbb R_-$ depending on its initial position. The associated transition density function was derived in \cite{romero1990diffusion} for various parameters.

\subsection{Approximations for nonlinear systems}

Now we turn back to the general nonlinear system \eqref{SDE}. 

\subsubsection{Short-time approximation}

The authors in \cite{Orland2011} transformed the Fokker-Planck equation \eqref{forward} for $p$ to the following imaginary time Schr\"odinger equation by the transformation $\Psi(x,t) = \exp(\frac{U(x)}{\sigma^2})p(x,t|x_0,t_0)$,
$$\frac{\partial\Psi}{\partial t}= \frac{\sigma^2}{2}\triangle\Psi(x,t)- \frac{1}{2\sigma^2}V(x)\Psi(x,t),$$
where $V(x)=|\nabla U(x)|^2 - \sigma^2 \triangle U(x)$; they then used Trotter approximation to approximate the Schr\"odinger kernel and obtained the following asymptotic formula,
$$p(x,t|y,s)=\exp\left\{-\frac{U(x)-U(y)}{\sigma^2} -\frac{|x-y|^2}{2\sigma^2 (t-s)}-\frac{t-s}{4\sigma^2}(V(x)+V(y)) \right\} + O(|t-s|^3).$$
This formula not only suggests that the SDE $(\ref{newLDPequation})$ can be approximated by the following SDE
\begin{equation*}
\left\{
\begin{aligned}
d\hat Y_t&=\left[\frac{x_T-\hat Y_t}{T-t}-\frac{T-t}{4}\nabla V(\hat Y_t)\right]dt+\sigma d\tilde{W}_t,\quad t\in(0,T),\\
\hat Y_0&=x_0,
\end{aligned}\right.
\end{equation*}
when the transition time $T$ is short, but also quantifies that the approximation error for $p$ is at most $O(T^3)$. 

In view of these, we have an approximation scheme for the most probable paths of the bridge SDE \eqref{newsde}, or equivalently, for ODE \eqref{MPTP}, as the following first-order differential equation:
\begin{equation*}
\left\{
\begin{aligned}
\frac{d\psi_{\text{appr},1}}{dt}&=\frac{x_T-\psi_{\text{appr},1}}{T-t}-\frac{T-t}{4}\nabla \left( |\nabla U|^2 - \sigma^2\triangle U \right) (\psi_{\text{appr},1}),\quad t\in(0,T),\\
\psi_{\text{appr},1}(0)&=x_0,
\end{aligned}\right.
\end{equation*}

More approximations for transition density functions for general SDEs with additive noise, with more complicated forms, can be found in \cite{ait1999transition,choi2015explicit}. It is worth remarking that the authors of \cite{HCW21} claimed that considering the (most probable) transition time for stochastic dynamical systems with constant noise in finite time intervals makes more sense than that the terminal time $T$ tends to infinity, since the constant noise intensity makes the ``particle'' under study fluctuate away the steady states easier and thus transitions between steady states occur more frequently. When $T<1$, the approximations for probability density $p$, given in \cite{ait1999transition,choi2015explicit}, enables us to approximate the MPTP of \eqref{SDE} or the ODE \eqref{MPTP} with sufficiently small error.

\subsubsection{Small-noise approximation}\label{sec-4-3-2}

We consider the small-noise version of system \eqref{SDE} as follows:
\begin{equation}\label{LDPequation}\left\{
  \begin{aligned}
    dX^{\varepsilon}_t &= - \nabla U(X^{\varepsilon}_t)dt+\varepsilon dW_t, \quad t\in(0,T], \\
    X_0^\varepsilon &=x_0,
  \end{aligned}\right.
\end{equation}
where $\varepsilon>0$ is a small parameter. 
If we let $\varepsilon$ tend to 0, such a system is considered in the framework of large deviation which has been studied with a rich history. See for example \cite{Freidlin2012} and references therein.

The  Freidlin--Wentzell theory of large deviations asserts that, for $\delta$ and $\varepsilon$ positive and sufficiently small,
\begin{equation*}
\begin{split}
\mathbb{P}^{x_0}(\|X^\varepsilon - \psi \|_T<\delta) \sim \exp(-\varepsilon^{-2}S_X^{FW}(\psi)),
\end{split}
\end{equation*}
where $S_X^{FW}$ is the FW action functional in \eqref{FW}. Intuitively $S_X^{FW}$ is the dominant term of OM functional (\ref{OM}). Indeed, we have the following proposition which is a straightforward consequence of $\Gamma$-convergence. 
\begin{proposition}
  As $\varepsilon \downarrow 0$, the most probable transition path of system $(\ref{LDPequation})$ converges to the minimizer of the following functional:
\begin{equation*}
\sup_{N_\psi\in \mathcal N(\psi)} \inf_{\phi\in N_\psi} S_{X}^{FW}(\phi) = \Gamma\text{-}\lim_{\varepsilon \downarrow 0} \varepsilon^2 S_{X}^{OM}(\phi),
\end{equation*}
where $\mathcal N(\psi)$ denotes the set of all neighbourhoods of $\psi$ in $C^2_{x_0,x_T}[0,T]$.
\end{proposition}

The Lagrangian of the  Freidlin--Wentzel action functional \eqref{FW} is 
\begin{equation}\label{Lagrangian-FW}
  L^{FW}(\psi,\dot{\psi}) = \frac{1}{2} |\dot{\psi}  + \nabla U(\psi)|^2
\end{equation}
and the associated Euler--Lagrange equation is a second-order boundary value problem that reads
\begin{equation*}
\begin{cases}
\ddot{\psi}-\frac{1}{2}\nabla |\nabla U(\psi)|^2=0,\\
\psi(0)=x_0,~\psi(T)=x_T.
\end{cases}
\end{equation*}
The classical variational method tells that the Euler--Lagrange equation is a necessary but not sufficient condition of the most probable transition paths.

On the other hand, the bridge process of system $(\ref{LDPequation})$ is
\begin{equation}\label{newLDPequation}\left\{
\begin{aligned}
dY^\varepsilon_t&=\left[ -\nabla U(Y^{\varepsilon}_t)+  \varepsilon^2 \nabla\ln p_\varepsilon(x_T,T|Y^\varepsilon_t,t) \right]dt +  \varepsilon d\tilde{W}_t,\quad t\in(0,T),\\
Y^\varepsilon_0&=x_0,
\end{aligned}\right.
\end{equation}
where $p_\varepsilon(x_T,T|x,t)$ is the transition density of the solution process of system $(\ref{LDPequation})$. The problem here is how to characterize the limit of the term $\varepsilon^2 \nabla\ln p_\varepsilon(x_T,T|Y^\varepsilon_t,t)$ as $\varepsilon\downarrow0$. 

The paper \cite{Delarue2017} for trajectory sampling gave a scheme to approximate the bridge SDE \eqref{newLDPequation} in one-dimensional case, in the limit $\varepsilon\downarrow0$.
The approximating SDE is given by
\begin{equation*}\left\{
\begin{aligned}
d\hat Y^\varepsilon_t&=\left[\frac{x_T-\hat Y^\varepsilon_t}{T-t}-\frac{T-t}{2}\int_{0}^{1}(1-u) \textstyle{\frac{d}{d x}\left[\left(\frac{d U}{d x}\right)^2\right]} \left(x_Tu+\hat Y^\varepsilon_t(1-u) \right)du\right]dt + \varepsilon d\tilde{W}_t,\  t\in(0,T),\\
\hat Y^\varepsilon_0&=x_0.
\end{aligned}\right.
\end{equation*}
As a consequence, we have an approximation scheme for the most probable paths of the system \eqref{newLDPequation}, or for ODE corresponding to \eqref{newLDPequation} by vanishing noise, described by the following integro-differential equation:
\begin{equation*}
\left\{
\begin{aligned}
\frac{d\psi_{\text{appr},2}}{dt}&=\frac{x_T-\psi_{\text{appr},2}}{T-t}-\frac{T-t}{2}\int_{0}^{1}(1-u) \textstyle{\frac{d}{d x}\left[\left(\frac{d U}{d x}\right)^2\right]} \left( x_Tu+\psi_{\text{appr},2}(1-u)\right) du,\ t\in(0,T),\\
\psi_{\text{appr},2}(0)&=x_0.
\end{aligned}\right.
\end{equation*}

\section{Examples and numerical experiments}\label{examples}

Let us consider several examples in order to illustrate our results.

\begin{example}[The free Brownian motion]\label{brownian}

The simplest case is the free particles in Euclidean space. In this case, the Green's function $p(x_T,T|x,t)$ can be written explicitly as follows
\begin{equation*}
p(x_T,T|x,t)=\frac{1}{\sqrt{2\pi(T-t)}}e^{-\frac{(x_T-x)^2}{2(T-t)}}.
\end{equation*}
The corresponding Markovian bridge process is described by the following SDE:
\begin{equation}\label{exp-BM}
dY_t=\frac{x_T-Y_t}{T-t}dt+d\tilde{W}_t,\quad Y_0=x_0.
\end{equation}
The partial derivative of the drift term with respect to the position variable is independent of the position variable. Thus by \eqref{maintheory}, the most probable path of \eqref{exp-BM} is
\begin{equation*}
\frac{d\psi^*}{dt}=\frac{x_T-\psi^*}{T-t},~\psi^*(0)=x_0 ~\Longrightarrow~\psi^*(t)=x_T+\frac{x_0-x_T}{T}(T-t),~t\in[0,T),
\end{equation*}
which can be verified as the extremal path of the OM functional $S^{OM}(\psi)=\frac{1}{2}\int_0^T\dot{\psi}^2ds$ over the path space $C^2_{x_0,x_T}[0,T]$.

\end{example}

\begin{example}[Linear systems]

In this example, we consider the linear system \eqref{linearequation} in one-dimensional and two-dimensional cases.

Case 1. Consider the scalar case of the system with $G=-\theta,~a=\theta\mu$ where $\theta$ and $\mu$ are constants. The system turns to
\begin{equation}\label{exp-OU}
\begin{split}
dX_t=\theta(\mu-X_t)dt+\sigma dW_t,\quad X_0=x_0\in \mathbb{R}.
\end{split}
\end{equation}
The solution of \eqref{exp-OU} is called an Ornstein--Uhlenbeck (OU) process \cite{Oks03}. OU processes are usually used in finance to model the spread of stocks or to calculate interest rates and currency exchange rates. They also appear in physics to model the motion of a particle under friction.

On the one hand, the potential function of the system \eqref{exp-OU}, as in \eqref{potential} satisfies $U''(x) = -\theta$. So the OM functional of the system is
\begin{equation*}
S^{OM}(\psi)=\frac{1}{2\sigma^2}\int_{0}^T \left[ (\dot{\psi}-\theta\mu+\theta\psi)^2 - \theta \right] dt.
\end{equation*}
The corresponding Euler--Lagrange (EL) equation reads
\begin{equation} \label{OUEL}
\begin{cases}
\ddot{\psi}+\theta^2(\mu-\psi)=0,\\
\psi(0)=x_0,\quad\psi(T)=x_T.
\end{cases}
\end{equation}
Hence, we can solve the second-order boundary value problem \eqref{OUEL}, if it is uniquely solvable, to obtain the most probable transition path.

On the other hand, the transition probability density function is \cite{Duan2015}
\begin{equation*}
\begin{split}
p(x_T,T|x,t)=\frac{\sqrt{\theta}}{\sigma\sqrt{\pi(1-e^{-2\theta(T-t)})}}\exp\left\{-\frac{\theta}{\sigma^2}\frac{\left[x_T-(e^{-\theta (T-t)}x+\mu-\mu e^{-\theta (T-t)})\right]^2}{1-e^{-2\theta (T-t)}}\right\},
\end{split}
\end{equation*}
which leads to a Markovian bridge process as in \eqref{newsde}.
Then by Corollary \ref{linearMPTP}, the most probable transition path of \eqref{exp-OU} is described by the following first-order ODE,
\begin{equation}\label{OU}
\begin{split}
\frac{d\psi^*}{dt}=\theta(\mu-\psi^*)+2\theta e^{-\theta (T-t)}\frac{x_T-\left(e^{-\theta (T-t)}\psi^*+\mu-\mu e^{-\theta (T-t)}\right)}{1-e^{-2\theta (T-t)}},\quad \psi^*(0)=x_0.
\end{split}
\end{equation}
As \eqref{OU} is a one-dimensional first-order linear ODE, its analytical solution can be explicitly found as follows,
\begin{equation*}
\begin{split}
\psi^*(t)=\exp\left(-\int_0^tP(s)ds\right)\left[ x_0 + \int_0^tQ(s) \exp\left(\int_0^sP(u)du\right)ds \right],
\end{split}
\end{equation*}
where
\begin{equation*}
\begin{split}
P(t)=\ &-\theta-\frac{\theta e^{-2\theta (T-t)}}{1-e^{-2\theta (T-t)}},\\
Q(t)=\ & \theta\mu +\theta e^{-\theta (t-t)}\frac{x_T-(\mu-\mu e^{-\theta (T-t)})}{1-e^{-2\theta (T-t)}}.
\end{split}
\end{equation*}

\bigskip
Case 2. Consider the system \eqref{linearequation} in $\mathbb R^2$ with $a=(0,0)^T,~\sigma=1$ and
$$ G = \left(
\begin{array}{cc}
0 & 1 \\
1 & 0 \\
\end{array} \right).$$
Now the system turns to the following coupled system,
\begin{equation}\label{2dOU}\left\{
  \begin{aligned}
    dX^1_t &=X_t^2dt+ dW^1_t, \\
    dX^2_t &=X_t^1dt+ dW^2_t, \\
    X_0 &=x_0 \in \mathbb R^2.
  \end{aligned}\right.
\end{equation}

On the one hand, the potential function is given by $U(x^1,x^2)= - x^1x^2 + \mathrm{constant}$, so $\triangle U \equiv 0$. The OM action functional of \eqref{2dOU} is
\begin{equation*}
S^{OM}(\psi) 
= \frac{1}{2}\int_0^T  |\dot{\psi}(s)-G\psi(s)|^2 ds,
\end{equation*}
and the corresponding Euler--Lagrange equation is
\begin{equation}\label{2dOUEL}
\begin{cases}
\ddot{\psi}^1=\psi^1,\\
\ddot{\psi}^2=\psi^2,\\
\psi(0)=x_0,~\psi(T)=x_T.
\end{cases}
\end{equation}
Note that the two-dimensional boundary value problem \eqref{2dOUEL} can be separated into two independent one-dimensional boundary value problems. Thus we can use the shooting method to solve them one-by-one.

On the other hand, observe that
\begin{equation*}
e^{Gt} = \frac{1}{2}\left(
\begin{array}{cc}
e^{t}+e^{-t} & e^{t}-e^{-t} \\
e^{t}-e^{-t} & e^{t}+e^{-t} \\
\end{array} \right),\qquad
\left(e^{Gt}\right)^{-1} = e^{-Gt} = \frac{1}{2}\left(
\begin{array}{cc}
e^{t} + e^{-t} & -e^{t} + e^{-t} \\
-e^{t} + e^{-t} & e^{t} + e^{-t} \\
\end{array} \right).
\end{equation*}
So the mean $\mu(t)$ of $X_t$ is
$$\mu(t)=e^{Gt}x_0= \frac{1}{2}\left(
\begin{array}{c}
(e^{t}+e^{-t})x_0^1 + (e^{t}-e^{-t})x_0^2 \\
(e^{t}-e^{-t})x_0^1 + (e^{t}+e^{-t})x_0^2 \\
\end{array} \right),$$
and the covariance matrix $\Sigma(t)$ is
\begin{equation*}
\begin{split}
\Sigma (t) 
=\ &e^{Gt} \int_{0}^t\left(e^{Gs}\right)^{-1}\left[\left(e^{Gs}\right)^{-1}\right]^Tds \left(e^{Gt}\right)^T 
= \frac{1}{4}\left(
\begin{array}{cc}
e^{2t}-e^{-2t} & e^{-2t}-2+e^{2t} \\
e^{-2t}-2+e^{2t} & e^{2t}-e^{-2t}  \\
\end{array} \right),\\
\end{split}
\end{equation*}
with matrix inverse (when $t>0$):
\begin{equation*}
\begin{split}
\Sigma^{-1}(t)=\ & \left(
\begin{array}{cc}
-\frac{e^{2t}-e^{-2t}}{(1-e^{2t})(1-e^{-2t})} & \frac{e^{-2t}-2+e^{2t}}{(1-e^{2t})(1-e^{-2t})} \\
\frac{e^{-2t}-2+e^{2t}}{(1-e^{2t})(1-e^{-2t})} & -\frac{e^{2t}-e^{-2t}}{(1-e^{2t})(1-e^{-2t})} \\
\end{array} \right).\\
\end{split}
\end{equation*}
According to Corollary \ref{linearMPTP}, we know that the most probable transition path $\psi^* = (\psi^{*1}, \psi^{*2})^T$ of system \eqref{2dOU} solves the following system of first-order ODEs:
\begin{equation}\label{2dOUMPTP}\left\{
  \begin{aligned}
    \left( \begin{array}{c} \dot\psi^{*1}(t) \\ \dot\psi^{*2}(t) \end{array} \right) &= \left( \begin{array}{c} \psi^{*2}(t) \\ \psi^{*1}(t) \end{array} \right) + \left(e^{G(T-t)}\right)^T \Sigma^{-1}(T-t) \left(x_T-e^{G(T-t)}\psi^*(t) \right),\quad t\in[0,T),\\
   \psi^*(0)&=x_0.
  \end{aligned}\right.
\end{equation}
Note that the first-order ODE system \eqref{2dOUMPTP} is still coupled while the Euler--Lagrange equation \eqref{2dOUEL} is already decoupled.

Figure \ref{2dOUfigure} shows the numerical solutions of the Euler--Lagrange equation \eqref{2dOUEL} and the first-order ODE system \eqref{2dOUMPTP} with 
different transition times, by using the shooting method and forward Euler scheme respectively. They fit each other quite well. This verifies the validity of our results for the 2-d system \eqref{2dOU}.

\begin{figure}[ht]
  \centering
  \includegraphics[width=0.7\textwidth]{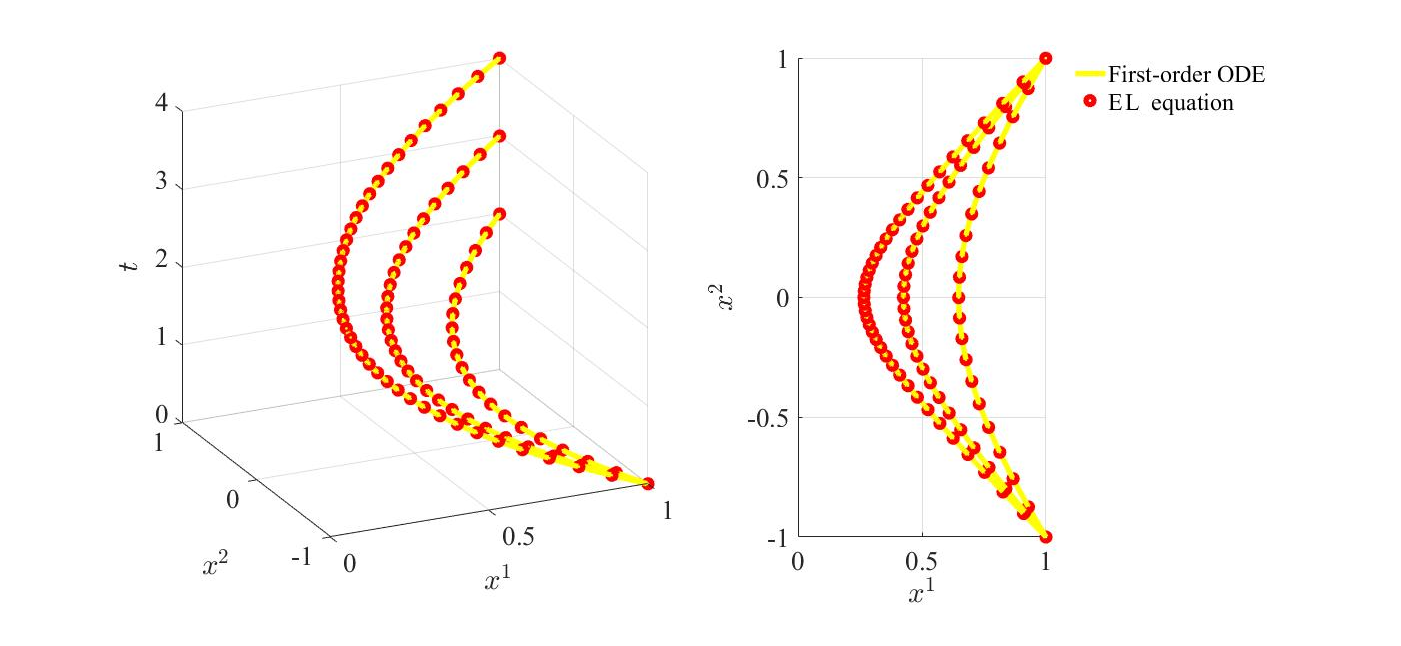}\\
  \caption{The most probable transition paths in $(x^1,x^2,t)$-plane and $(x^1,x^2)$-plane with initial and terminal conditions $x_0=(1,-1),~x_T=(1,1)$, under different transition times $T=2,3,4$. 
  The yellow lines are the numerical solutions of \eqref{2dOUMPTP} and the red point lines are the solutions of equation \eqref{2dOUEL}.}\label{2dOUfigure}
\end{figure}
\end{example}

\begin{example}[Hongler's model]
Consider Hongler's model \eqref{hongler} with parameters  $A=1$ and $B=-0.6$. The potential is
\begin{equation*} 
    \begin{aligned}
        U(x)=\frac{x^2}{2}-\ln \left[   {}_1F_1\left(\frac{1}{10};\frac{1}{2};x^2 \right) \right],
    \end{aligned}
\end{equation*}
and such $-U$ has the shape of double wells, which is shown in Figure \ref{HonglerU}. The system is
\begin{equation*} 
    \begin{aligned}
        dX_t=\left(-X_t +\frac{2}{5}\frac{ {}_1F_1\left(\frac{11}{10};\frac{3}{2};X_t^2 \right)}{{}_1F_1\left(\frac{1}{10};\frac{1}{2};X_t^2 \right)}X_t \right)dt+dW_t,\quad X_0=x_0.
    \end{aligned}
\end{equation*}
The two wells are located around $x=-1.4$ and $x=1.4$ respectively. We consider the transition of this model from $x_0=-1.4$ at time 0 to $x_T=1.4$ at time $T$. The MPTP satisfies the following ODE
\begin{equation*}
\begin{aligned}
    d\psi^*(t)=&\ \Bigg[ -\psi^*(t)+\frac{2e^{- (T-t)}\left(x_T-\psi^*(t)e^{- (T-t)}\right)}{1-e^{-2 (T-t)} }  \Bigg]dt,\quad t\in(0,T),\quad \psi^*(0)=-1.4.
\end{aligned}
\end{equation*}
Note that this ODE has the same form as the OU case \eqref{OU} with parameters $\theta=1$ and $\mu=0$. Thus, they share the same MPTPs (see our discussions in Subsection \ref{sec-4-2}).

\begin{figure}[htb]
  \centering
  \includegraphics[width=0.5\textwidth]{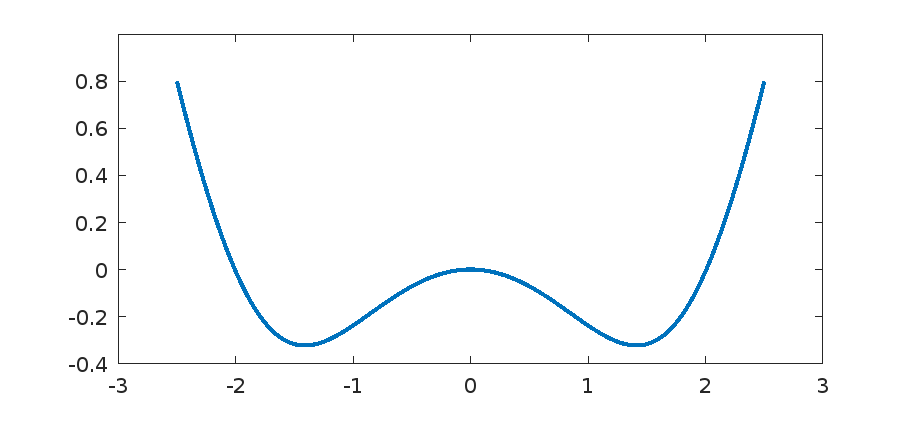}\\
  \caption{\small{The 
 shape of minus Hongler's potential $-U$ with $A=1$ and $B=-0.6$.}}\label{HonglerU}
\end{figure}
\end{example}

\begin{example}[Ginzburg--Landau double-well system]

Consider the following scalar double-well system:
\begin{equation*}
\begin{split}
dX_t=(X_t-X_t^3)dt+\sigma dW_t,\quad X_0=x_0\in \mathbb{R}.
\end{split}
\end{equation*}
From \cite[Theorems 2.2 \& 2.4]{xi2019jump} or \cite[Theorem 3.1]{albeverio2010existence}, we know that the above SDE has a pathwisely unique global strong solution.
It is easy to see that $1$ and $-1$ are stable equilibrium states of the associated deterministic system and thereby metastable states of the stochastic system, while $0$ is an unstable equilibrium state of the deterministic system. We consider the transition phenomena between metastable states $x_0=-1$ and $x_T=1$.

The first approximation $\psi_{\text{appr},1}$ of the most probable transition path is
\begin{equation}\label{psi1}
\begin{split}
\frac{d\psi_{\text{appr},1}}{dt}= \frac{x_T-\psi_{\text{appr},1}}{T-t}-\frac{1}{2}(T-t)\left[ (\psi_{\text{appr},1}-\psi_{\text{appr},1}^3)(1-3\psi_{\text{appr},1}^2) -3\sigma^2\psi_{\text{appr},1} \right],~t\in[0,T),\ \psi_{\text{appr},1}(0)=x_0.\\
\end{split}
\end{equation}
And the other one $\psi_{\text{appr},2}$ for small noise case is
\begin{equation}\label{psi2}
\begin{split}
\frac{d\psi_{\text{appr},2}}{dt}=\frac{x_T-\psi_{\text{appr},2}}{T-t}-(T-t)\int_{0}^{1}(1-u)(Z-Z^3)(1-3Z^2)du,~t\in[0,T),\ \psi_{\text{appr},2}(0)=x_0,
\end{split}
\end{equation}
where $Z=x_Tu+\psi_{\text{appr},2}(1-u)$.

The OM action functional of this system is
\begin{equation}\label{OM-GL}
\begin{split}
S^{OM}(\psi)=\frac{1}{2}\int_{0}^{T}\left[\frac{(\dot{\psi}-(\psi-\psi^3))^2}{\sigma^2}+(1-3\psi^2)\right]dt.
\end{split}
\end{equation}
The corresponding Euler--Lagrange equation reads
\begin{equation}\label{ELOM}
\begin{cases}
\ddot{\psi}=\left(\psi-\psi^3\right)(1-3\psi^2)-3\sigma^2\psi,\\
\psi(0)=x_0, \psi(T)=x_T,
\end{cases}
\end{equation}
A general numerical way to solve this second-order differential equation is the shooting method. We denote the path computed by the shooting method as $\psi_{\text{shoot}}$.

We compute the corresponding paths $\psi_{\text{appr},1}$, $\psi_{\text{appr},2}$ and $\psi_{\text{shoot}}$ respectively. Here we set the time step to be $\Delta t=10^{-3}$. The paths $\psi_{\text{appr},1}$ and $\psi_{\text{appr},2}$ can be numerically computed by forward Euler scheme according to equations $(\ref{psi1})$ and $(\ref{psi2})$. And we use the shooting method with Newton iteration to compute the path $\psi_{\text{shoot}}$, where we set the iteration error to be $10^{-3}$, i.e., we stop the iteration if $|\psi_{\text{shoot}}(T)-x_T|<10^{-3}$.

\begin{figure}[htb]
  \centering
  \includegraphics[width=0.5\textwidth]{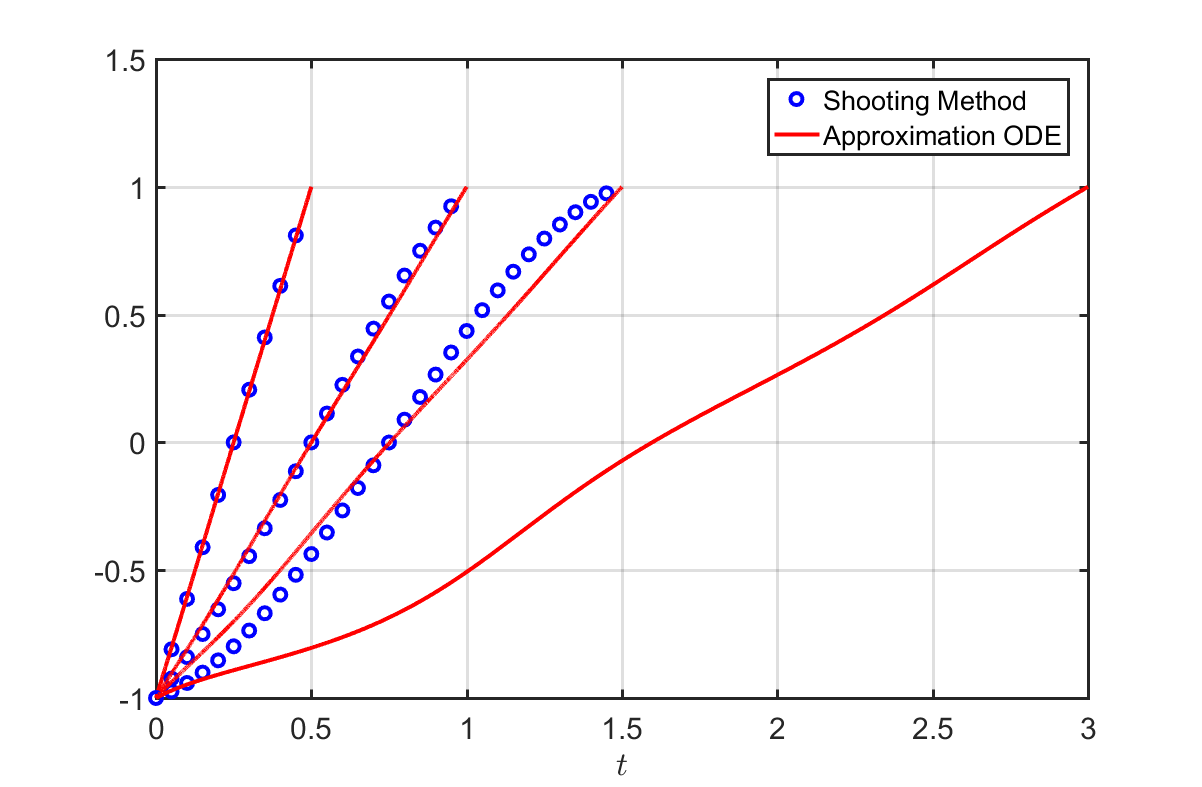}\\
  \caption{\small{The most probable transition paths approximated by approximation ODE \eqref{psi1} ($\psi_{\text{appr},1}$) and shooting method \eqref{ELOM} ($\psi_{\text{shoot}}$) with the same noise intensity $\sigma=1$, for different transition times $T=0.5,~1,~1.5,~3$.}}\label{doublewellOM}
\end{figure}

\begin{table}[htb]
\centering
\begin{tabular}{c|c|c|c|c}
  \hline
  $\mbox{Transition Time}$ & 0.5 & 1 & 1.5 & 3  \\
  \hline
  $S^{OM}(\psi_{\text{shoot}})$  & 4.0167 &  2.0108 & 1.2853 & NaN\\
   $S^{OM}(\psi_{\text{appr},1})$ &   4.0194 & 2.0345 & 1.3772 &  0.6288\\
  \hline
\end{tabular}
\caption{\small{The values of Onsager--Machlup functional for the most probable transition paths in Figure $\ref{doublewellOM}$.}}\label{MPTPOM}
\end{table}

Figure \ref{doublewellOM} shows the paths computed by \eqref{psi1} and \eqref{ELOM} under the noise intensity $\sigma=1$, for different transition times $T=0.5,~1,~1.5,~3$. We also compute the values of corresponding Onsager--Machlup functional of all these paths by discretization of \eqref{OM-GL}.

The plots in Figure \ref{doublewellOM} and their OM values in Table \ref{MPTPOM} show that when the transition time is short, the approximation ODE \eqref{psi1} is indeed applicable for simulating the MPTP of the system. 
It is worth noting that, in Figure \ref{doublewellOM} when $T=3$, the numerical shooting method failed to find the most probable transition path and thus the OM functional value in Table \ref{MPTPOM} is undefined and represented by ``NaN''. The shooting method did not work because the initial value of $\dot\psi$ we chose is not suitable and makes the iteration diverge. But the shooting method is not the only way to solve the boundary value problem. Other methods like the finite element method are also applicable with sufficient efficiency and accuracy, though this is beyond the topic of our paper.


Figure $\ref{doublewellFW}$ shows the paths computed by \eqref{psi2} and \eqref{ELOM} under different noise intensities $\sigma=1,~0.5,~0.1,$ and fixed transition time $T=2$. We also compute the values of the OM functional, listed in Table \ref{MPTPFW}. These numerical results show that when the noise intensity $\sigma$ is small,  \eqref{psi2} is suitable for the approximation of MPTPs, as expected in Subsection \ref{sec-4-3-2}.

\begin{figure}[htb]
  \centering
  \includegraphics[width=0.5\textwidth]{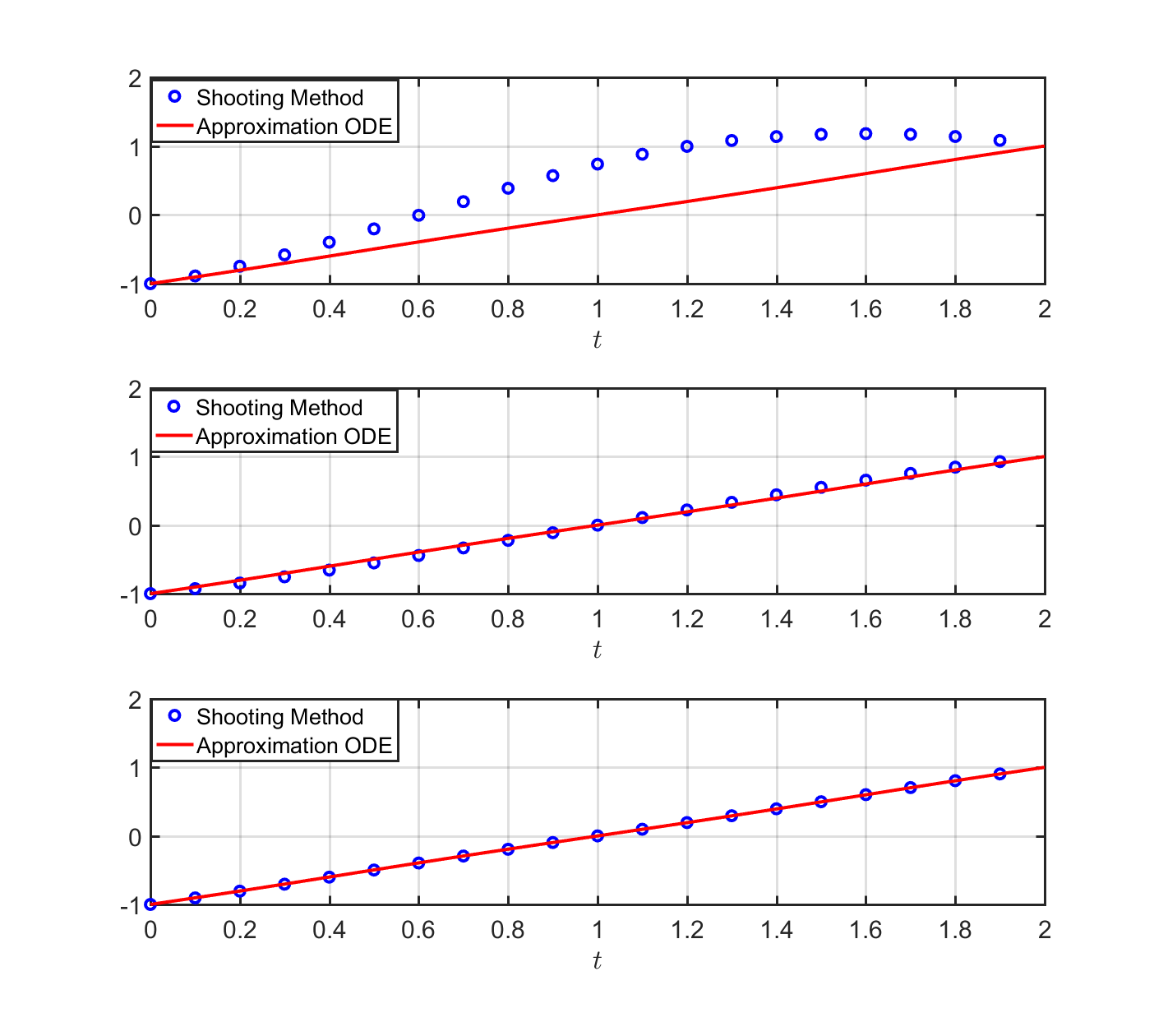}\\
  \caption{\small{The most probable transition paths approximated by approximation ODE \eqref{psi2} ($\psi_{\text{appr},2}$) and shooting method \eqref{ELOM} ($\psi_{\text{shoot}}$) under different noise intensities $\sigma=1$ (upper), 0.5 (middle),~0.1 (bottom), and fixed transition time $T=2$.}}\label{doublewellFW}
\end{figure}

\begin{table}[htb]
\centering
\begin{tabular}{c|c|c|c}
  \hline
  $\mbox{Noise Intensity}$ & 1 & 0.5 & 0.1 \\
  \hline
  $S^{OM}(\psi_{\text{shoot}})$  &  0.7059 &  4.2467 &  107.5956 \\
  $S^{OM}(\psi_{\text{appr},2})$ & 1.0761 & 4.3040 &   107.5974 \\
  \hline
\end{tabular}
\caption{\small{The values of Onsager--Machlup functional for the most probable transition paths in Figure $\ref{doublewellFW}$.}}\label{MPTPFW}
\end{table}

\end{example}

\section{Connections to other topics}\label{sampling&control}

In this section, we will briefly sketch the connections with other topics including trajectory sampling and optimal control, without going deep into them.

\subsection{Trajectory sampling}

As we have seen in Remark \ref{remark-1}-(i), the probabilistic configurations of strong solutions of two SDEs \eqref{SDE} and \eqref{newsde} are not necessarily the same. On the one hand, if we specify \emph{different} probability spaces for them, or if we just consider \emph{weak} solutions of them instead of strong ones, then the only relation between the solutions that we can talk about is their distributional information. Indeed, we have shown in Lemma \ref{equivalence} that the solution of the bridge SDE \eqref{newsde} shares the same law with the $(x_0,T,x_T)$-bridge derived from the original SDE \eqref{SDE}. However, in general, no trajectory information for their solutions can be clarified, or everything for their trajectories can be possible in this case.

On the other hand, if we choose the \emph{same} configuration for \eqref{SDE} and \eqref{newsde}, that is, $(\tilde{\Omega},\mathcal{\tilde F}, \{\tilde{\mathcal{F}}_t\}_{t\ge0}, \mathbb{\tilde P}) = (\Omega,\mathcal{F}, \{\mathcal{F}_t\}_{t\ge0}, \mathbb{P})$ and $\tilde W = W$, then we can construct the strong solution of \eqref{newsde} directly from that of \eqref{SDE}. More precisely, by replacing the noise terms in \eqref{SDE} and \eqref{newsde}, we obtain the following

\begin{corollary}[Trajectory relation between SDEs and their bridges]
Let $X$ and $Y$ be the unique strong solutions of SDEs \eqref{SDE} and \eqref{newsde}, respectively, on the same probability space $(\Omega,\mathcal{F}, \{\mathcal{F}_t\}_{t\ge0}, \mathbb{P}^{x_0})$ endowed with the same $k$-dimensional Brownian motion $W$. Then the following SDE holds,
\begin{equation}\label{X-Y}
   dY_t=\left[ \nabla U(X_t) - \nabla U(Y_t)+\sigma^2\nabla\ln p(x_T,T|Y_t,t) \right]dt + dX_t, \quad t\in[0,T).
\end{equation}
\end{corollary}

Even for a trajectory $X (\omega)$ that reaches $x_T$ at time $t=T$, the resulting trajectory $Y (\omega)$ realized from \eqref{X-Y} is \emph{not} the same as $X (\omega)$, since the extra drift $\sigma^2\nabla\ln p(x_T,T|Y_t(\omega),t)$ is not identically zero (see Example \ref{brownian} for an example of Brownian case).

The relation \eqref{X-Y} suggests a general approach for sampling trajectories of Markovian bridges: given a sample trajectory $X (\omega)$ of the strong solution of the original SDE \eqref{SDE}, one can use \eqref{X-Y} to sample a trajectory $Y (\omega)$ for the strong solution of bridge SDE \eqref{newsde} by many discrete-time schemes, e.g., Euler--Maruyama method, if the transition density $p$ is explicitly expressed.

In particular, for the Brownian and Ornstein--Uhlenbeck cases, many explicit representations for the solutions of bridge SDE \eqref{newsde} can be found in the literature \cite{barczy2013sample}. Consider the Ornstein--Uhlenbeck (OU) process satisfying the following scalar SDE (a special case of \eqref{exp-OU}):
\begin{equation}\label{OU-1}
dX_t= -\theta X_t dt+\sigma dW_t,\quad X_0=0\in \mathbb{R},
\end{equation}
with two constants $\sigma,\theta>0$. The standard Brownian motion can be regarded as the solution of \eqref{OU-1} with $\theta=0$ and $\sigma=1$.
Three representations, anticipative version, integral representation and space-time transform, are proposed in \cite{barczy2013sample}, respectively to construct Brownian and OU bridges. These representations are equal in law, as they are all solutions to the bridge SDE \eqref{newsde} for Brownian and OU cases respectively. Specifically, the integral representation is the unique strong solution to the associated bridge SDE, while the other two representations are only weak solutions. It was also demonstrated and partially proved that when a trajectory $X (\omega)$ of strong solution of \eqref{OU-1} reaches $0$ at time $t=T$, the sample trajectory realized from the integral representation as strong solution does not coincide with $X (\omega)$, while the sample trajectories realized from other two representations as weak solutions may and may not coincide. These match with our previous observations. 

Different from the bridge SDE construction and the three representations in \cite{barczy2013sample}, an intuitive and ingenious construction of Brownian and OU bridges was introduced in \cite[Appendix B]{pinski2021novel}, by taking a linear combination of two unconstrained independent trajectories and setting the endpoint of the linear combination to zero. That is, we suppose $X^1$ and $X^2$ to be two mutually independent copies of solution of \eqref{OU-1}, on probability spaces $(\Omega^1,\F^1,\{\F^1_t\},\mathbb{P}^1)$ and $(\Omega^2,\F^2,\{\F^2_t\},\mathbb{P}^2)$ respectively. Define a new process $X$ on the product probability space $(\Omega^1\times \Omega^2,\F^1 \otimes \F^2,\{\F^1_t \otimes \F^2_t\},\mathbb{P}^1\otimes \mathbb{P}^2)$ by
\begin{equation*}
  X(\omega_1, \omega_2) = X^1(\omega_1)\cos(\kappa(\omega_1, \omega_2)) + X^2(\omega_2)\sin(\kappa(\omega_1, \omega_2)),
\end{equation*}
where $\kappa$ is a random variable on the product space $\Omega^1\times \Omega^2$ determined by the following terminal condition,
\begin{align*}
X_T^1 \cos\kappa + X^2_T\sin\kappa =0,
\end{align*}
that is, $\kappa=\arctan(- X^1_T/X^2_T )$, with the conventions that $\arctan(0/0)=0$, $\arctan(+\infty)=\pi/2$ and $\arctan(-\infty)=-\pi/2$.
This construction implies that the new process $X$ satisfies $X_0 = X_T = 0$ and
\begin{equation}\label{X}
    dX_t = -\theta X_t dt+ \sigma d( W^1_t \cos\kappa + W_t^2\sin\kappa),
\end{equation}
We can prove that $X$ indeed has the same law with the $(0,T,0)$-bridge of the original OU process \eqref{OU-1}. See Appendix \ref{Bapp}, precisely Propositions \ref{BB} and \ref{OUB}. Since the noise $W^1 \cos\kappa + W^2 \sin\kappa$ in SDE \eqref{X} is not an $\{\F^1_t \otimes \F^2_t\}$-adapted process due to the fact that the random variable $\kappa$ depends on the information at time $t=T$, $X$ in \eqref{X} is only a weak solution of the bridge SDE associated to \eqref{OU-1}, which gives an alternative anticipative representation of $(0,T,0)$-OU bridges. 


\subsection{Optimal control perspective}

Due to Theorem \ref{maintheory}, the optimization problem
\begin{equation}\label{optimal-0}
\begin{split}
\inf_{\psi\in C^2_{x_0,x_T}[0,T]}S_{X}^{OM}(\psi)
\end{split}
\end{equation}
turns to another optimization problem
\begin{equation}\label{optimal}
  \inf_{\psi\in C^2_{x_0}[0,T]}S_{Y}^{OM}(\psi),
\end{equation}
where $S_{Y}^{OM}$ is the OM functional for the system \eqref{newsde} over time interval $[0,T]$, i.e.,
\begin{equation*}
\begin{split}
S_{Y}^{OM}(\psi) = \frac{1}{2}\int_0^{T}\left[\frac{|\dot{\psi}(s)-b(s,\psi(s))|^2}{\sigma^2}+\nabla\cdot b(s,\psi(s))\right]ds,
\end{split}
\end{equation*}
where $b$ is the modified drift in \eqref{m-drift}.
The Lagrangian functions of the actions $S_{X}^{OM}$ and $S_{Y}^{OM}$ are both of the form
\begin{equation}\label{Lagrangian-OM}
  L^{OM}(t,\psi,\dot \psi) = \frac{1}{2} |\dot \psi- v(t,\psi)|^2 + \frac{\sigma^2}{2} \nabla\cdot v(t,\psi)
\end{equation}
with $v = -\nabla U$ and $v=b$, respectively.

The two optimization problems \eqref{optimal-0} and \eqref{optimal} can be both translated to optimal control problems. To be precise, \eqref{optimal-0} can be translated into
\begin{equation*}
  \text{minimizing} \quad J_X[\psi,u] := \frac{1}{2}\int_0^T \left[ \frac{|u(s)|^2}{\sigma^2} - \triangle U(\psi(s)) \right] ds + \Theta(\psi(T)),
\end{equation*}
subject to
\begin{equation}\label{state}\left\{
\begin{aligned}
  \dot{\psi}(s) &= -\nabla U(\psi(s)) + u(s) ,\quad s\in[0,T], \\
  \psi(0) &= x_0,
\end{aligned}\right.
\end{equation}
where the functional $J_X$ is called the payoff functional, the function $\Theta$ is given by
\begin{equation*}
  \Theta(x) =
  \begin{cases}
    0, & x = x_T, \\
    +\infty, & \text{otherwise}.
  \end{cases}
\end{equation*}
and called the terminal cost, equation \eqref{state} is called the state equation.
Similarly, \eqref{optimal} is equivalent to
\begin{equation*}
  \text{minimizing} \quad J_Y[\psi,u] := \frac{1}{2}\int_0^T \left[ \frac{|u(s) - \sigma^2\nabla\ln p(x_T,T|\psi(s),s)|^2}{\sigma^2}+ \nabla\cdot b(s,\psi(s)) \right] ds,
\end{equation*}
subject to the same state equation \eqref{state}. The singularities of $\Theta$ and $p$ at $x_T$ imply that the minimizers of $J_X$ and $J_Y$ both satisfy $\psi(T) = x_T$. The equivalence between \eqref{optimal-0} and \eqref{optimal} provides an indirect way to obtain that $J_X$ and $J_Y$ share the same minimizer, which is not apparent from their forms. 

A recent work \cite{GLLL23} reinterpreted the transition-path theory of the original diffusion \eqref{SDE} from the energy basin of $x_0$ to that of $x_T$, as a stochastic control problem in infinite time horizon in terms of the Freidlin--Wentzel Lagrangian \eqref{Lagrangian-FW}, without the divergence term compared to the Onsager--Machlup Lagrangian \eqref{Lagrangian-OM} considered here. We also refer to \cite{selk2021information,HS12} for more discussions on optimal control problems arising from the trajectory or entropic information of processes. It is worth mentioning that the second author and his collaborator recently developed a framework of stochastic geometric mechanics \cite{HZ23} \red{revealing the underlying ``second-order'' geometric structures in stochastic optimal control, and applied it to Onsager's reciprocal theory in nonequilibrium thermodynamics \cite{HZ23a}.}


\section{Conclusion and discussion}\label{conclusions}

In most of the existing literature, the problem of finding the most probable transition paths of stochastic dynamical systems is solved by studying the Euler--Lagrange equation, a second-order ODE with two boundary values.  In this work, we showed that the most probable transition paths of a stochastic system coincide with the most probable paths of the corresponding Markovian bridge system, and they are determined by a first-order ODE. This provides a new insight for related topics. The result mainly relies on the derivation of Onsager--Machlup action functionals from bridge measures. This method is different from existing works of deriving Onsager--Machlup action functionals. The Markovian bridge system has an extra singular drift term that forces all sample trajectories to end at a given point. For linear systems and Hongler's model, the first-order ODEs were proved to imply the Euler--Lagrange equations. For general nonlinear stochastic systems, although it is not possible to get an explicit expression for this extra drift, there exist some approximations for it. 
We should notice that our first-order ODEs are sufficient and necessary descriptions of most probable transition paths, while Euler--Lagrange equations are only necessary but not sufficient. So in our framework, to find the most probable transition paths for a general nonlinear stochastic dynamical system, the only issue to be concerned about is how to approximate the extra drift and what the error is. There is no need to worry about the problem of local minimizers, since Theorem \ref{maintheory} promises that the paths are global minimizers. This is not the case for Euler--Lagrange equations. 

To summarize, in this paper, we first developed a new method to derive Onsager--Machlup action functionals for Markovian bridge measures whereas previous works cannot apply to such measures. Secondly, we show that for linear systems, nonlinear Hongler's model, and general nonlinear stochastic systems, the most probable transition paths can be determined or approximated by first-order ODEs with explicit forms. This provides a potential framework to study transitions of stochastic systems. 

\section*{Acknowledgements}
The authors would like to thank Dr.~Ying Chao, Dr.~Shuang Chen, Dr~Xiaoli Chen, Dr.~Meng Li, Dr.~Pingyuan Wei, Dr.~Wei Wei, Dr.~Ao Zhang and Prof.~J.-C. Zambrini for helpful discussions. We would like to thank the anonymous reviewers for their thoughtful comments and efforts towards improving our manuscript. The work of Y. Huang is partly supported by the NSFC grants 11531006 and 11771449. Y. Huang also would like to thank the support from his research group in the National University of Singapore during his postdoctoral period.
The work of Q. Huang is supported by FCT, Portugal, project PTDC/MAT-STA/28812/2017.

\section*{Data availability}

The data cannot be made publicly available upon publication because they are not available in a format that is sufficiently accessible or reusable by other researchers. The data that support the findings of this study are available upon reasonable request from the authors.

\footnotesize{
\bibliographystyle{unsrt}
\bibliography{ref}
}

\begin{appendices}
\section{New constructions for one-dimensional \texorpdfstring{$(0,T,0)$}{Lg}-Browinan and -OU bridges}\label{Bapp}
In \cite{barczy2013sample}, three representations (anticipative version, integral representation, and space-time transform) are proposed to sample the bridge processes of Brownian motions and OU processes. These representations are equal in law and are solutions of the associated bridge SDE. Specifically, the integral representation is the unique strong solution, while the anticipative version and space-time transform representations are weak solutions. 

A new way (different from the above three representations) of sampling one-dimensional Brownian or OU bridges was proposed in \cite{pinski2021novel}. The OU process considered satisfies the following one-dimensional SDE:
\begin{equation}\label{OUapp}
\begin{split}
dZ_t=-\theta Z_tdt + \sigma dW_t,\quad Z_0=0,
\end{split}
\end{equation}
where $W$ is a standard Brownian motion.
By admitting $\theta=0$, \eqref{OUapp} can also cover Brownian motions with diffusivity $\sigma^2$. So in the sequel, we will regard Brownian motions as OU processes with vanishing $\theta$.
Suppose that $Z^1$ and $Z^2$ are two mutually independent OU processes, solving \eqref{OUapp} governed by independent standard Brownian motions $W^1$ and $W^2$ respectively. Let $\kappa$ be a real-valued random variable determined almost surely by
\begin{align*}
Z_T^1 \cos\kappa + Z^2_T \sin\kappa =0.
\end{align*}
Construct a new process 
\begin{equation}\label{OU-4}
  Z = Z^1 \cos\kappa + Z^2 \sin\kappa.
\end{equation}
Then clearly, $Z$ satisfies $Z_0 = Z_T = 0$ and, when $Z^1, Z^2$ are OU processes, the following SDE:
\begin{equation*}
    dZ_t =-\theta Z_t dt + \sigma d( W^1_t \cos\kappa + W_t^2 \sin\kappa).
\end{equation*}
Note that the noise $W^1 \cos\kappa + W^2 \sin\kappa$ is not a Brownian motion since $\kappa$ contains information from $\mathcal F_T$, so that it is not apparent whether $Z$ is a $(0,T,0)$-OU bridge or not.

The author of \cite{pinski2021novel} claimed that the process $Z$ constructed above is a $(0,T,0)$-OU bridge, without proof.
In this section, we shall prove that such construction is indeed an OU bridge.


\subsection{Proof of the Brownian case}
Put $\theta=0$ and $\sigma=1$ in \eqref{OUapp}. The above construction reads
\begin{equation}\label{constructedBB}
Z_t=\frac{W^2_T}{\sqrt{(W^1_T)^2+(W^2_T)^2}} W^1_t- \frac{W^1_T}{\sqrt{(W^1_T)^2+(W^2_T)^2}} W_t^2.
\end{equation}
Let $p_W(y,t|x,s)$ be the probability transition density of the standard Brownian motion from $x$ at time $s$ to $y$ at time $t$. 
Set $x_0 = y_0 = 0$. For $0 = t_0<t_1<t_2<\cdots<t_n<t_{n+1} = T$ where $n\in\mathbb{N}_+$, the independence of $W^1$ and $W^2$ implies that the joint probability density function of $(W^1_{t_1},\cdots, W^1_{t_n}, W^1_{T}, W^2_{t_1},\cdots, W^2_{t_n}, W^2_{T})$ is given by
\begin{equation*}
f_{W^1_{t_1},\cdots,W^1_{t_n},W^1_T,W^2_{t_1},\cdots,W^2_{t_n},W^2_T}(x_1,\cdots,x_{n+1},y_1,\cdots,y_{n+1}) = \prod_{i=1}^{n+1} p_W(x_i,t_i|x_{i-1},t_{i-1}) p_W(y_i,t_i|y_{i-1},t_{i-1})
\end{equation*}

Make the following change of variables: 
\begin{equation}\label{change-variables}
  \begin{split}
    y_i&=y_i, \quad i=0,\cdots,n+1, \\
    x_i &= x_i(z_i,x_{n+1},y_i,y_{n+1})= \begin{cases} \frac{z_i\sqrt{x_{n+1}^2+y_{n+1}^2} + x_{n+1}y_i}{y_{n+1}}, & i=1,\cdots,n,\\
    x_i, & i=0,n+1,
    \end{cases}
  \end{split}
\end{equation}
It is easy to verify that the Jacobian determinant of the above change of variables is
$$
|J|=\left(\frac{\sqrt{x_{n+1}^2+y_{n+1}^2}}{|y_{n+1}|}\right)^n.$$
Then the joint probability density for $(Z_{t_1},~\cdots,Z_{t_n},~W^1_T,~W^2_{t_1},\cdots,W^2_{t_n},~W^2_T)$ is

\begin{equation}\label{joint-1}
  \begin{split}
&\ f_{Z_{t_1},\cdots,Z_{t_n},W^1_T,W^2_{t_1},\cdots,W^2_{t_n},W^2_T}(z_1,\cdots,z_n,x_{n+1},y_1,\cdots,y_{n+1}) \\
=&\  |J| \prod_{i=1}^{n+1} p_W\big(x_i(z_i,x_{n+1},y_i,y_{n+1}),t_i \big|x_{i-1}(z_{i-1},x_{n+1},y_{i-1},y_{n+1}),t_{i-1}\big) 
p_W(y_i,t_i |y_{i-1},t_{i-1}).
    \end{split}
\end{equation}
Before we go into the calculation details, we prove the following properties of the density function $p_W$. 

\begin{lemma}\label{densityproof}
Let $p_W$ be the transition density function of the standard Brownian motion. Then the following equations hold:
\begin{equation*}
  \begin{aligned}
    &\ \frac{\sqrt{x_{n+1}^2+y_{n+1}^2}}{|y_{n+1}|} p_W\big(x_i(z_i,x_{n+1},y_i,y_{n+1}),t_i \big|x_{i-1}(z_{i-1},x_{n+1},y_{i-1},y_{n+1}),t_{i-1}\big) p_W(y_i,t_i |y_{i-1},t_{i-1})\\
    = &\ \begin{cases}
    \tilde{p}_W \left(y_{i} + \frac{x_{n+1}z_i}{\sqrt{x_{n+1}^2+y_{n+1}^2}} ,t_i \bigg|y_{i-1} + \frac{x_{n+1}z_{i-1}}{\sqrt{x_{n+1}^2+y_{n+1}^2}} ,t_{i-1}\right)p(z_i,t_i|z_{i-1},t_{i-1}), &i=1,2,\cdots,n,\\
    \tilde{p}_W \left(y_{n+1},T\bigg|y_{n} + \frac{x_{n+1}z_n}{\sqrt{x_{n+1}^2+y_{n+1}^2}} ,t_n\right)p(0,T|z_n,t_n), & i=n+1.
    \end{cases}
  \end{aligned}
\end{equation*}
where $\tilde p_W$ is the transition density function of the Brownian motion whose variance at time $t$ is $\frac{y^2_{n+1}t}{x^2_{n+1}+y^2_{n+1}}$.
\end{lemma}

\begin{proof}
For the case $i=1$, we have \begin{equation*}
  \begin{aligned}
    &\ \frac{\sqrt{x_{n+1}^2+y_{n+1}^2}}{|y_{n+1}|} p_W\left(\frac{z_1\sqrt{x_{n+1}^2+y_{n+1}^2} + x_{n+1}y_1}{y_{n+1}},t_1\Bigg|0,0\right)p_W(y_1,t_1 |0,0)\\
    =&\ \frac{\sqrt{x_{n+1}^2+y_{n+1}^2}}{|y_{n+1}|}\frac{1}{2\pi t_1} \exp\left\{ -\frac{\left(  z_1\sqrt{x_{n+1}^2+y_{n+1}^2} + x_{n+1}y_1 \right)^2 + y_1^2y^2_{n+1}}{2y^2_{n+1}t_1} \right\}\\
    =&\ \sqrt{\frac{x_{n+1}^2+y_{n+1}^2}{2\pi t_1 y^2_{n+1}}} \exp\left\{ -\frac{x^2_{n+1}+y^2_{n+1} }{2y^2_{n+1}t_1} \left(y_1+\frac{x_{n+1}z_1}{\sqrt{x_{n+1}^2+y_{n+1}^2}}\right)^2   \right\} \frac{1}{\sqrt{2\pi t_1}} \exp\left( \frac{z_1^2}{2t_1} \right)\\ 
    =&\ \tilde{p}_W\left(y_1 +\frac{x_{n+1}z_1}{\sqrt{x_{n+1}^2+y_{n+1}^2}},t_1 \Bigg|0,0\right)p_W(z_1,t_1|0,0).
  \end{aligned}
\end{equation*}
When $2\le i\le n$, we use the fact that the density function $p_W$ is homogeneous in space and time and the result of $i=1$ to derive as follows,
\begin{equation*}
  \begin{aligned}
    &\ \frac{\sqrt{x_{n+1}^2+y_{n+1}^2}}{|y_{n+1}|} p_W\left(\frac{z_i\sqrt{x_{n+1}^2+y_{n+1}^2} + x_{n+1}y_i}{y_{n+1}},t_i\Bigg|\frac{z_{i-1}\sqrt{x_{n+1}^2+y_{n+1}^2} + x_{n+1}y_{i-1}}{y_{n+1}},t_{i-1}\right)p_W(y_i,t_i |y_{i-1},t_{i-1})\\
    =&\ \frac{\sqrt{x_{n+1}^2+y_{n+1}^2}}{|y_{n+1}|} p_W\left(\frac{(z_i-z_{i-1})\sqrt{x_{n+1}^2+y_{n+1}^2} + x_{n+1}(y_i-y_{i-1})}{y_{n+1}},t_i-t_{i-1}\Bigg|0,0\right)p_W(y_i-y_{i-1},t_i-t_{i-1} |0,0)\\
    =&\ \tilde{p}_W\left(y_{i}-y_{i-1} + \frac{x_{n+1}(z_i-z_{i-1})}{\sqrt{x_{n+1}^2+y_{n+1}^2}} ,t_i-t_{i-1}\Bigg|0,0\right)p_W(z_i-z_{i-1},t_i-t_{i-1}|0,0)\\
    =&\ \tilde{p}_W\left(y_{i} + \frac{x_{n+1}z_i}{\sqrt{x_{n+1}^2+y_{n+1}^2}} ,t_i\Bigg|y_{i-1} + \frac{x_{n+1}z_{i-1}}{\sqrt{x_{n+1}^2+y_{n+1}^2}} ,t_{i-1}\right)p_W(z_i,t_i|z_{i-1},t_{i-1}).
    \end{aligned}
\end{equation*}
The proof for the case $i=n+1$ is similar to the first one, as follows
\begin{equation*}
  \begin{aligned}
    &\ \frac{\sqrt{x_{n+1}^2+y_{n+1}^2}}{|y_{n+1}|} p_W\left(x_{n+1},T\Bigg|\frac{z_n\sqrt{x_{n+1}^2+y_{n+1}^2} + x_{n+1}y_n}{y_{n+1}},t_n\right) p_W(y_{n+1},T |y_n,t_n)\\
    =&\ \frac{\sqrt{x_{n+1}^2+y_{n+1}^2}}{|y_{n+1}|}\frac{1}{2\pi(T-t_n)} \exp \left\{ -\frac{1}{2(T-t_n)} \left(x_{n+1}- \frac{z_n\sqrt{x_{n+1}^2+y_{n+1}^2}}{y_{n+1}}-\frac{x_{n+1}y_n}{y_{n+1}}\right)^2 \right\} \exp \left( -\frac{(y_{n+1}-y_n)^2}{2(T-t_n)} \right)\\
    =&\ \sqrt{\frac{x_{n+1}^2+y_{n+1}^2}{2\pi (T-t_n) y_{n+1}^2}} \exp \left\{ -\frac{x_{n+1}^2+y_{n+1}^2}{2(T-t_n)y_{n+1}^2} \left(  y_{n+1}-y_n- \frac{x_{n+1}z_n}{\sqrt{x_{n+1}^2+y_{n+1}^2}}  \right)^2 \right\} \frac{1}{\sqrt{2\pi (T-t_n)}} \exp\left( \frac{z_n^2}{2(T-t_n)}\right) \\
    =&\ \tilde{p}_W\left(y_{n+1},T\Bigg|y_{n} + \frac{x_{n+1}z_n}{\sqrt{x_{n+1}^2+y_{n+1}^2}} ,t_n\right)p_W(0,T|z_n,t_n).
  \end{aligned}
\end{equation*}
The proof is completed.
\end{proof}

\begin{proposition}\label{BB}
The process \eqref{constructedBB} has the same law with the standard $(0,T,0)$-Brownian bridge. 
\end{proposition}

\begin{proof}
Due to \eqref{joint-1} and Lemma \ref{densityproof}, it is easy to obtain the finite-dimensional probability density for $Z$ by applying Chapman--Kolmogorov equations,
\begin{equation*}
  \begin{aligned}
    &\ f_{Z_{t_1},\cdots,Z_{t_n}}(z_1,\cdots,z_n) \\
    =&\ \int_{-\infty}^{\infty}\cdots\int_{-\infty}^{\infty} f_{Z_{t_1},\cdots,Z_{t_n},W^1_T,W^2_{t_1},\cdots,W^2_{t_n},W^2_T}(z_1,\cdots,z_n,x_{n+1},y_1,\cdots,y_{n+1})dx_{n+1}dy_{n+1}\cdots dy_1\\
    =&\ \int_{-\infty}^{\infty}\cdots\int_{-\infty}^{\infty}\left(\frac{\sqrt{x_{n+1}^2+y_{n+1}^2}}{|y_{n+1}|} \right)^n p_W\left(\frac{z_1\sqrt{x_{n+1}^2+y_{n+1}^2} + x_{n+1}y_1}{y_{n+1}},t_1\Bigg|0,0\right)\cdots\\
&\ \cdot p_W\left(x_{n+1},T\Bigg|\frac{z_n\sqrt{x_{n+1}^2+y_{n+1}^2} + x_{n+1}y_n}{y_{n+1}},t_n\right) p_W(y_1,t_1 |0,0)\cdots p_W(y_{n+1},T |y_n,t_n)dy_1\cdots dy_{n+1}dx_{n+1}\\
=&\ p_W(0,T|z_n,t_n)p_W(z_n,t_n|z_{n-1},t_{n-1})\cdots p_W(z_1,t_1|0,0)\int_{-\infty}^{\infty}\cdots\int_{-\infty}^{\infty}\left(\frac{\sqrt{x_{n+1}^2+y_{n+1}^2}}{|y_{n+1}|} \right)^{-1} \tilde{p}_W\left(y_1 +\frac{x_{n+1}z_1}{\sqrt{x_{n+1}^2+y_{n+1}^2}},t_1 \Bigg|0,0\right)\cdots\\
&\ \cdot \tilde{p}_W\left(y_{n+1},T\Bigg|y_{n} + \frac{x_{n+1}z_n}{\sqrt{x_{n+1}^2+y_{n+1}^2}} ,t_n\right) dy_1\cdots dy_{n+1}dx_1\cdots dx_{n+1}\\
=&\ p_W(0,T|z_n,t_n)p_W(z_n,t_n|z_{n-1},t_{n-1})\cdots p_W(z_1,t_1|0,0)\int_{-\infty}^{\infty}\int_{-\infty}^{\infty}\left(\frac{\sqrt{x_{n+1}^2+y_{n+1}^2}}{|y_{n+1}|} \right)^{-1} \tilde{p}_W\left(y_{n+1},T |0,0\right) dy_{n+1}dx_{n+1}\\
=&\ p_W(0,T|z_n,t_n)p_W(z_n,t_n|z_{n-1},t_{n-1})\cdots p_W(z_1,t_1|0,0)\int_{-\infty}^{\infty}\int_{-\infty}^{\infty}\frac{1}{\sqrt{2\pi T}}\exp\left(-\frac{x^2_{n+1}+y^2_{n+1}}{2T}\right)dy_{n+1}dx_{n+1}\\
=&\ \frac{p_W(0,T|z_n,t_n)p_W(z_n,t_n|z_{n-1},t_{n-1})\cdots p_W(z_1,t_1|0,0)}{p_W(0,T|0,0)}.
    \end{aligned}
\end{equation*}
This shows that the process \eqref{constructedBB} has the same finite-dimensional distribution with the standard $(0,T,0)$-Brownian bridge (cf. \eqref{finitedistribution}). The result follows from a standard argument of measure theory (cf. the proof of Lemma \ref{equivalence}).
\end{proof}

\subsection{Proof of the Ornstein--Uhlenbeck case}
Now we turn to the general case in which $\theta>0,~\sigma=1$. The  construction \eqref{OU-4} reads
\begin{equation}\label{constructedOU}
\begin{split}
Z_t=\frac{Z     ^2_T}{\sqrt{(Z^1_T)^2+(Z^2_T)^2}} Z^1_t- \frac{Z^1_T}{\sqrt{(Z^1_T)^2+(Z^2_T)^2}} Z_t^2,
\end{split}
\end{equation}
where $Z^1$ and $Z^2$ are two mutually independent solution processes of \eqref{OUapp}, i.e., Ornstein--Uhlenbeck processes. We denote the common transition probability density function of $Z^1$ and $Z^2$ by $p$, which is given by 
$$p(y,t|x,s)=\sqrt{\frac{\theta}{\pi(1-e^{-2\theta(t-s)})}}\exp\left( -  \frac{\theta(y-xe^{-\theta(t-s)})^2}{1-e^{-2\theta(t-s)}} \right).$$

%
Under the same change of variables as \eqref{change-variables}, we have
\begin{lemma}
    Let $p$ be the transition density function of the Ornstein--Uhlenbeck process \eqref{OUapp} with $\theta>0$ and $\sigma=1$. Then the following equations hold:
\begin{equation*}
  \begin{aligned}
    &\ \frac{\sqrt{x_{n+1}^2+y_{n+1}^2}}{|y_{n+1}|} p\big(x_i(z_i,x_{n+1},y_i,y_{n+1}),t_i \big|x_{i-1}(z_{i-1},x_{n+1},y_{i-1},y_{n+1}),t_{i-1}\big) p(y_i,t_i |y_{i-1},t_{i-1})\\
    = &\ \begin{cases}
    \tilde{p} \left(y_{i} + \frac{x_{n+1}z_i}{\sqrt{x_{n+1}^2+y_{n+1}^2}} ,t_i \bigg|y_{i-1} + \frac{x_{n+1}z_{i-1}}{\sqrt{x_{n+1}^2+y_{n+1}^2}} ,t_{i-1}\right)p(z_i,t_i|z_{i-1},t_{i-1}), &i=1,2,\cdots,n,\\
    \tilde{p} \left(y_{n+1},T\bigg|y_{n} + \frac{x_{n+1}z_n}{\sqrt{x_{n+1}^2+y_{n+1}^2}} ,t_n\right)p(0,T|z_n,t_n), & i=n+1.
    \end{cases}
  \end{aligned}
\end{equation*}
 where $\tilde p$ is the transition density function for the Ornstein--Uhlenbeck process solving \eqref{OUapp} with $\sigma^2= \frac{ y^2_{n+1}}{x^2_{n+1}+y^2_{n+1}}$ and the same $\theta$.
\end{lemma}
\begin{proof}
For the case $i=1$, we have
\begin{equation*}
  \begin{aligned}
    &\ \frac{\sqrt{x_{n+1}^2+y_{n+1}^2}}{|y_{n+1}|}p\left(\frac{z_1\sqrt{x_{n+1}^2+y_{n+1}^2} + x_{n+1}y_1}{y_{n+1}},t_1\Bigg|0,0\right)p(y_1,t_1 |0,0)\\
    =&\ \frac{\sqrt{ x_{n+1}^2+y_{n+1}^2} }{|y_{n+1}|}  \frac{\theta}{   \pi(1-e^{-2\theta t_1}) } \exp \left\{-\frac{\theta}{ 1-e^{- \theta t_1 } } \left(\frac{z_1\sqrt{x_{n+1}^2+y_{n+1}^2} + x_{n+1}y_1}{y_{n+1}}\right)^2 -\frac{\theta y_1^2}{ 1-e^{-2\theta t_1} }\right\}\\
    =&\ \sqrt{\frac{\theta(x_{n+1}^2+y_{n+1}^2)}{   \pi(1-e^{-2\theta t_1})y_{n+1}^2} }  \exp \left\{- \frac{\theta(x_{n+1}^2+y_{n+1}^2)}{(1-e^{-2\theta t_1})y_{n+1}^2} \left(y_1 + \frac{x_{n+1}z_1}{\sqrt{x_{n+1}^2+y_{n+1}^2}} \right)^2\right\} \sqrt{\frac{\theta}{ \pi(1-e^{-2\theta t_1})} } \exp\left( - \frac{\theta z_1^2}{ 1-e^{-2\theta t_1} } \right)\\
    =&\ \tilde{p}\left(y_1 +\frac{x_{n+1}z_1}{\sqrt{x_{n+1}^2+y_{n+1}^2}},t_1 \Bigg|0,0\right)p(z_1,t_1|0,0).
  \end{aligned}
\end{equation*}
For $2\leq i\leq n$, even though the density function $p$ is not homogeneous in space, different from the Brownian case, we can still transform to the form of the case $i=1$. That is,
\begin{equation*}
  \begin{aligned}
    &\ \frac{\sqrt{x_{n+1}^2+y_{n+1}^2}}{|y_{n+1}|}p\left(\frac{z_i\sqrt{x_{n+1}^2+y_{n+1}^2} + x_{n+1}y_i}{y_{n+1}},t_i\Bigg|\frac{z_{i-1}\sqrt{x_{n+1}^2+y_{n+1}^2} + x_{n+1}y_{i-1}}{y_{n+1}},t_{i-1}\right)p(y_i,t_i |y_{i-1},t_{i-1})\\
    =&\ \frac{\sqrt{x_{n+1}^2+y_{n+1}^2}}{|y_{n+1}|} \frac{\theta}{  \pi(1-e^{-2\theta (t_i-t_{i-1})}) } \\
    &\ \cdot \exp \left\{-\frac{\theta}{ 1-e^{-2\theta (t_i-t_{i-1}) }} \left[ \frac{z_i\sqrt{x_{n+1}^2+y_{n+1}^2} + x_{n+1}y_i}{y_{n+1}} - e^{-\theta(t_i-t_{i-1})} \left(\frac{z_{i-1}\sqrt{x_{n+1}^2+y_{n+1}^2} + x_{n+1}y_{i-1}}{y_{n+1}}\right) \right]^2 \right\}\\
         &\ \cdot \exp\left\{-\frac{\theta \left(y_i - e^{-\theta(t_i-t_{i-1})}y_{i-1}\right)^2}{ 1-e^{-2\theta (t_i-t_{i-1})} }\right\}\\
         =&\ \frac{\sqrt{x_{n+1}^2+y_{n+1}^2}}{|y_{n+1}|} p\left(\left( z_i-z_{i-1}e^{-\theta(t_i-t_{i-1})}\right)\frac{\sqrt{x_{n+1}^2+y_{n+1}^2}}{y_{n+1}}+\left(y_i-y_{i-1}e^{-\theta(t_i-t_{i-1})}\right)\frac{x_{n+1}}{y_{n+1}},t_i-t_{i+1}\Bigg|0,0\right)\\
         &\ \cdot p\left(y_i-y_{i-1}e^{-\theta(t_i-t_{i-1})},t_i-t_{i-1} \big|0,0\right) \\
         =&\ \tilde{p}\left(y_i-y_{i-1}e^{-\theta(t_i-t_{i-1})} + \left(z_i-z_{i-1}e^{-\theta(t_i-t_{i-1})}\right)\frac{x_{n+1}}{\sqrt{x_{n+1}^2+y_{n+1}^2}},t_i-t_{i-1} \Bigg|0,0\right)\\
         &\ \cdot p\left(z_i-z_{i-1}e^{-\theta(t_i-t_{i-1})},t_i-t_{i-1}\big|0,0 \right)\\
        =&\ \tilde{p}\left(y_{i} + \frac{x_{n+1}z_i}{\sqrt{x_{n+1}^2+y_{n+1}^2}} ,t_i \Bigg|y_{i-1} + \frac{x_{n+1}z_{i-1}}{\sqrt{x_{n+1}^2+y_{n+1}^2}} ,t_{i-1}\right)p(z_i,t_i|z_{i-1},t_{i-1}).
  \end{aligned}
\end{equation*}
The proof for the case $i=n+1$ is similar to the case $i=i$, and will be omitted here.
The proof is completed.
\end{proof}

As in the last subsection, a straightforward application of the above lemma yields the following proposition.
\begin{proposition}\label{OUB}
The process \eqref{constructedOU} has the same law with the corresponding $(0,T,0)$-Ornstein--Uhlenbeck bridge. 
\end{proposition}

\section{Derivations related to Hongler’s model}\label{appendixhongler}
The confluent hypergeometric function \cite{AS64} ${}_1F_1$ is given by
\begin{equation*}
    \begin{aligned}
        {}_1F_1(a;b;z) :=\sum_{n=0}^{\infty}\frac{a^{(n)}z^n}{b^{(n)}n!}, \quad a,b\in \R, z\in\mathbb C
    \end{aligned}
\end{equation*}
where for $n\in\N$ and a real number $a$, $a^{(n)}$ denotes the $n$-th rising factorial of $a$, defined by
\begin{equation*}
    \begin{aligned}
        a^{(0)}=&\ 1,\\
        a^{(n)}=&\ a(a+1)(a+2)\cdots (a+n-1).
    \end{aligned}
\end{equation*}
The confluent hypergeometric function satisfies the following relation
\begin{equation}\label{relation-hyperg}
    \begin{aligned}
        z\frac{d{}_1F_1}{dz}(a;b;z)=z\frac{a}{b}{}_1F_1(a+1;b+1,z).
    \end{aligned}
\end{equation}
Substituting \eqref{honglerU1st} and the density \eqref{densityhongler} to \eqref{MPTP}, and applying the relation \eqref{relation-hyperg}, we obtain that the MPTPs of Hongler's SDE \eqref{hongler} satisfy
\begin{equation*} 
\begin{aligned}
    d\psi^*(t)=&\ \Bigg[ -\sqrt{A}\psi^*(t) \\
    &\quad + \left( {}_1F_1\left(\frac{B}{4\sqrt{A}}+\frac{1}{4};\frac{1}{2};\sqrt{A}(\psi^*(t))^2 \right)\right)^{-1} \\
    &\ \qquad \times 2\sqrt{A} \psi^*(t) \left(\frac{B}{2\sqrt{A}}+\frac{1}{2}  \right)   {}_1F_1\left(\frac{B}{4\sqrt{A}}+\frac{5}{4};\frac{3}{2};\sqrt{A}(\psi^*(t))^2 \right) \Bigg)\\
    &\quad + \frac{2\sqrt{A}e^{-\sqrt{A}(T-t)}}{1-e^{-2\sqrt{A}(T-t)} } \left(x_T-\psi^*(t)e^{-\sqrt{A}(T-t)} \right) \\
&\quad - \left( \sqrt{ \frac{2\pi}{\sqrt{A}} \sinh\left(\sqrt{A}(T-t)\right) }    {}_1F_1\left(\frac{B}{4\sqrt{A}}+\frac{1}{4};\frac{1}{2};\sqrt{A}(\psi^*(t))^2 \right)\right)^{-1} \\
&\ \qquad \times 2\sqrt{A}\psi^*(t)\left(\frac{B}{2\sqrt{A}}+\frac{1}{2}  \right) \sqrt{ \frac{2\pi}{\sqrt{A}} \sinh\left(\sqrt{A}(T-t) \right) }    {}_1F_1\left(\frac{B}{4\sqrt{A}}+\frac{5}{4};\frac{3}{2};\sqrt{A}(\psi^*(t))^2 \right) \Bigg]dt \\
=&\ \left[ -\sqrt{A}\psi^*(t) + \frac{2\sqrt{A}e^{-\sqrt{A}(T-t)}}{1-e^{-2\sqrt{A}(T-t)} } \left(x_T-\psi^*(t)e^{-\sqrt{A}(T-t)} \right) \right]dt.
\end{aligned}
\end{equation*}

\end{appendices}

\end{document}